\renewcommand{\P}{\mathbb{P}}
\newcommand{\E}{\mathbb{E}}
\newcommand{\tmax}{t_{\textup{max}}}
\newcommand{\simple}{\textsc{Simple}\xspace}
\newcommand{\B}{\mathcal{B}}
\newcommand{\ls}{\lambda_s}
\newcommand{\wmax}{w_n(\tmax)}
\renewcommand{\tt}{\tilde{t}}
\newcommand{\ftt}{\lfloor \tilde{t} \rfloor}
\newcommand{\Sopt}{S^{\textup{opt}}}
\newtheorem{definition}{Definition}
\newtheorem{theorem}{Theorem}
\newtheorem{lemma}{Lemma}
\newtheorem{corollary}{Corollary}
\newcommand{\footremember}[2]{%
   \footnote{#2}
    \newcounter{#1}
    \setcounter{#1}{\value{footnote}}%
}
\newcommand{\footrecall}[1]{%
    \footnotemark[\value{#1}]%
} 
\begin{document}

\title{Reconstruction of geometric random graphs with the \simple algorithm}

\author{Clara Stegehuis\footremember{ut}{Department of Electrical Engineering, Mathematics and Computer Science, 
University of Twente, The Netherlands} $\&$ Lotte Weedage\footrecall{ut}}

\maketitle
\begin{abstract}
Graph reconstruction can efficiently detect the underlying topology of massive networks such as the Internet. Given a query oracle and a set of nodes, the goal is to obtain the edge set by performing as few queries as possible. An algorithm for graph reconstruction is the \simple algorithm \cite{mathieu2023simple}, which reconstructs bounded-degree graphs in $\tilde{O}(n^{3/2})$ queries. We extend the use of this algorithm to the class of geometric random graphs with connection radius $r \sim n^k$, with diverging average degree. We show that for this class of graphs, the query complexity is $\tilde{O}(n^{2k+1})$ when $k > 3/20$. This query complexity is up to a polylog$(n)$ term equal to the number of edges in the graph, which means that the reconstruction algorithm is almost edge-optimal. We also show that with only $n^{1+o(1)}$ queries it is already possible to reconstruct at least $75\%$ of the non-edges of a geometric random graph, in both the sparse and dense setting. Finally, we show that the number of queries is indeed of the same order as the number of edges on the basis of simulations. 
\end{abstract}
\textbf{Keywords:} Random geometric graphs, randomized algorithms, graph reconstruction

\section{Introduction}
Given only the set of nodes of a graph, how can one, with as little information as necessary, reconstruct all edges in this graph? This problem is called the \emph{graph reconstruction problem}: how to fully reconstruct a graph using a minimal number of queries. The exact nature of the query oracle used in these problems can vary, but in general such oracles have a node or a pair of nodes as input and the output gives some information on this node or pair of nodes.\\

A motivating example for the graph reconstruction problem is the discovery of the topology of the Internet \cite{motamedi2014survey} at different levels, such as the autonomous systems (AS) level or the router level. The discovery of these topologies is necessary for e.g. routing purposes or assessing the robustness of the network. However, as the network is huge and links might change over time, it is infeasible to execute too many time- and energy-consuming queries on the network. Thus, it is desirable to use as few queries as possible to reconstruct the network. As the internet showcases a large spatial dependency \cite{boguna2021network}, we aim to understand how the presence of an underlying network geometry can help in fast reconstruction.\\

A popular method to obtain the topology of the Internet is \texttt{ traceroute}. This method reveals the path that a packet chooses from source to destination over the internet. By performing \texttt{traceroute} queries from sufficiently many different sources and destinations, one can reconstruct the complete network topology \cite{motamedi2014survey}. In general, \texttt{traceroute} resembles a \emph{shortest-path} oracle that reveals the shortest path between a queried source and target. However, while \texttt{traceroute} might provide the full shortest path, often this information is not available due to e.g. privacy \cite{kannan2018graph} or the size of the network and the paths \cite{beerliova2006network}. In these cases, often a distance oracle, where only the length of the shortest path is given, is more realistic and reliable.\\

The \simple algorithm, introduced in \cite{mathieu2023simple}, is a randomized algorithm that can fully reconstruct general bounded-degree graphs on $n$ nodes in $\tilde{O}(n^{5/3})$\footnote{$\tilde{O}(f(n)) = O(f(n) \text{polylog}(n))$} distance queries. In this paper, we extend the use of the \textsc{Simple} algorithm, until now only valid for sparse, bounded-degree graphs with a maximum degree up to $O(\log(n))$, to the class of geometric random graphs with growing average degrees. We show how the \simple algorithm can reconstruct geometric random graphs (GRGs) on $n$ nodes with connection radius $r \sim n^k$ with a query complexity of $\tilde{O}(n^{2k+1})$ when $\frac{3}{20} \leq k < \frac{1}{2}$ and $\tilde{O}(n^{3/2-4k/3})$ for smaller $k$. Thus, for dense geometric random graphs with $k \geq \frac{3}{20}$, this algorithm results in an almost \emph{optimal} reconstruction: the query complexity for reconstruction is of the same order (multiplied by a log-term) as the number of edges. The results presented in this paper provide the first bounds on query complexity for reconstruction of dense graphs with unbounded maximum degree. Moreover, we show that for both dense and sparse geometric random graphs, the first step of the \simple algorithm - where all nodes are queried with a small set of seed nodes - already identifies $75\%$ of the non-edges of the graph. \\

The set-up of this paper is as follows. In Section \ref{sec:related} we provide an overview of the literature on graph reconstruction, followed by a discussion of the \textsc{Simple} algorithm and the random graph model we will use throughout this paper in Section \ref{sec:simple}. Then, in Section \ref{sec:lemmas} we prove our main results on the number of queries needed to reconstruct geometric graphs and prove our results.

\section{Related work}\label{sec:related}

\paragraph{Query oracles.}
An important aspect of the graph reconstruction problem is the choice of oracle. Some of these queries classify as \emph{global} query oracles: they receive as input a node and output all shortest paths \cite{beerliova2006network,sen2010covert} or all distances to other nodes \cite{erlebach2006network}. \emph{Local} query oracles, such as the shortest-path oracle \cite{reyzin2007learning} and the distance oracle \cite{kannan2018graph}, receive as input a node pair and output the nodes on the shortest path or the length of the shortest path between this node pair, respectively. In a similar manner, the \emph{kth-hop} oracle \cite{afshar2022mapping} outputs the $k$-th node on the shortest path between a given node pair, and the \emph{betweenness} oracle \cite{abrahamsen2016graph} has a third input node, and tells whether this node lies on the shortest path between the other two nodes. Lastly, the \emph{maximal independent set} oracle receives a subset of nodes as input and outputs any maximal independent set in the induced subgraph of this set of nodes \cite{konrad2024graph}.\\

Motivated by the reconstruction of Internet networks, our focus is on the local type of oracle: the distance oracle. In the rest of this paper, therefore, we focus only on distance queries. 

\paragraph{Graph reconstruction algorithms.}
A trivial upper bound reconstructing general graphs using the distance oracle is $O(n^2)$. Indeed, in the worst case examples of a complete graph or an empty graph, all pairs of nodes need to be queried \cite{reyzin2007learning}. Without knowledge of the graph structure, graph reconstruction therefore always takes $O(n^2)$ queries in the worst case. \\

Most results about graph reconstruction are therefore on special cases of graph classes that do not contain these worst-case example. Specifically, bounded-degree, connected graphs have been investigated thoroughly. The algorithm proposed in \cite{kannan2018graph} requires $\tilde{O}(n^{3/2})$ queries, given a maximum degree $\Delta$ of $O(\text{polylog}(n))$. This randomized algorithm is based on Voronoi cell decomposition: nodes in the graph are grouped into equal-sized clusters that then can be reconstructed independently. In \cite{mathieu2023simple} the authors propose a simpler algorithm that uses in expectation $\tilde{O}(n)$ queries in random $\Delta$-regular graphs and $\tilde{O}(n^{5/3})$ queries in general bounded degree graphs. In case of bounded degree chordal graphs, there exist reconstruction algorithms that have a query complexity of $O(\Delta^2 n\log^2(n))$\cite{rong2021reconstruction}. For the class of $k-$chordal graphs - graphs without cycles of length at least $k+1$ - an exact reconstruction algorithm is given in \cite{bastide2023optimal} with a query complexity of $O(n\log(n))$. \\

A common element of the above-mentioned graphs is that the considered graphs have bounded degree, often contain only a few cycles, and have no underlying geometry. However, real-world networks are often defined based on their underlying geometry, such as Internet networks on the \emph{Point of Presence} (PoP) level or router level \cite{motamedi2014survey}.\\

Lastly, graph reconstruction often goes hand-in-hand with graph verification \cite{kannan2018graph}. In reconstruction, the goal is to find all edges and non-edges of a graph $\mathcal{G}$. For verification one obtains a graph $\hat{\mathcal{G}}$ in the same class of graphs as $\mathcal{G}$, and one has to verify that it is equal to $\mathcal{G}$. Graph verification is important for example in error detection and detecting anomalous network behaviour \cite{castro2004network}. A trivial upper bound on the verification complexity is the reconstruction complexity, as one can always reconstruct the entire graph $\mathcal{G}$ to verify whether the two graphs are the same. For some specific graphs however, graph verification has a lower query complexity than graph reconstruction. For graphs with a maximum degree $\Delta \leq n^{o(1)}$, the greedy algorithm in \cite{kannan2018graph} verifies graphs using the distance oracle in $O(n^{1+o(1)})$, while the reconstruction algorithm has a query complexity of $\tilde{O}(n^{3/2})$. Similarly, verification of chordal graphs with maximum degree $\Delta$ can be done in $O(\delta^2 n \log(n))$ distance queries compared to $O(\delta^2 n \log^2(n))$ for reconstruction \cite{rong2021reconstruction}.

\paragraph{Graph embedding.}
Related to network reconstruction is the retrieval of the locations of the nodes in a network with underlying geometry, given only the edge list. Finding an embedding of a graph can be seen as the opposite of graph reconstruction as edges are given and the objective is to find locations of the nodes. In \cite{diaz2020learning,dani2022improved,dani2023reconstruction} the authors propose algorithms to retrieve the geometric embedding of the nodes in a geometric random graph (without knowing the original embedding) up to a displacement of $O(\sqrt{\log(n)})$ for dense graphs. Contrary to the graph reconstruction problem, here the adjacency matrix is known, while the locations of the nodes and the connection radius $r$ are unknown. The localization algorithm proposed in \cite{diaz2020learning} is based on computing an estimate for $r$, which is then used to approximate the Euclidean distance based on the graph distance. However, while these algorithms give a precise location of the node, they all require knowledge of the edge list of the graph. Therefore, such a localization algorithm cannot be used for graph reconstruction. 

\section{Model and main results}\label{sec:simple}
In this paper, we consider a geometric random graph $\mathcal{G}(n,r) = (V, E)$ consisting of $n = |V|$ nodes. Nodes that are located within radius $r$ of each other connect, considering a Euclidean distance measure. The locations of the nodes $v\in V$ are distributed by a Poisson point process (PPP) $\Phi$ with intensity 1 on a two-dimensional $[0, \sqrt{n}] \times [0, \sqrt{n}]$ square with torus boundary conditions. To simplify notation, we identify the node $u \in V$ and its position $(x_u, y_u) \in \Phi$ both by $u$ and will use the words node and point interchangeably. We denote the graph or shortest path distance between node $u$ and $v$ with $d_G(u, v)$, and the Euclidean distance between the two nodes with $d_E(u,v)$ . \\

We define a thinning $\Phi_S$ on $\Phi$ as a randomly-chosen sample of the nodes in $\Phi$, resulting in a new PPP with intensity $\lambda_s$. Due to the nature of the PPP, the number of nodes in any area $A$, defined as $N(A)$ or $N_s(A)$, respectively, is Poisson distributed.\\

The \textsc{Simple} algorithm is an algorithm designed to recover connected general bounded degree graphs \cite{mathieu2023simple} (Algorithm \ref{alg:simple}). As a query oracle, the algorithm applies \emph{distance} queries. For every queried pair, the oracle returns the number of edges on the shortest path between node $u$ and $v$. As a result, the algorithm returns the full edge set of the graph and has query complexity $\tilde{O}(n^{5/3})$. Although faster algorithms exist for general graph reconstruction (e.g. \cite{kannan2018graph}), we specifically analyze this algorithm as it is easy to follow and we can show that it performs asymptotically optimal on GRGs.\\

\begin{algorithm}
  \caption{\textsc{Simple}$(V,s)$, from \cite{mathieu2023simple}. The function \textsc{Query}$(u,v)$ returns the number of edges on the shortest path between node $u$ and $v$. }\label{alg:simple}
  \begin{algorithmic}[1]
    \State $S \gets$ sample of $s$ nodes uniformly and independently at random from $V$
    \For{$u \in S$ and $v \in V$}
        \State \textsc{Query}$(u,v)$
    \EndFor
    \State $\hat{E} \gets$ set of node pairs $\{a, b\} \subseteq V$ such that, for all $u \in S$, $|d_G(u,a) - d_G(u,b)|\leq 1$ \label{line:condition}
    \For{$\{a,b\}\in \hat{E}$}\label{line:second}
        \State \textsc{Query}$(a,b)$
    \EndFor 
    \State \Return set of node pairs $\{a,b\} \in \hat{E}$ such that $d_G(a,b) = 1$
  \end{algorithmic}
  \end{algorithm}

The \simple algorithm consists of two steps. First it calculates the shortest path distance between the seeds and every other node in the graph. Based on these queries, Line \ref{line:condition} decides whether a node is \emph{distinguishable} or not, where distinguishable is defined as follows \cite{mathieu2023simple}:
\begin{definition}[distinguishable]\label{def:distinguishing}
    We call a node pair $u, v \in V$ \textit{distinguishable} if there is a seed $s \in S$ such that the condition in Algorithm \ref{alg:simple} line \ref{line:condition} is not satisfied, i.e. 
    \begin{align}
        |d_G(s, u) - d_G(s,v)| > 1. \label{eq:distinguishable}
    \end{align}
    If \eqref{eq:distinguishable} is satisfied for a seed $s \in S$, we say this seed distinguishes the node pair $u,v$. The seed $s$ is then called a distinguisher for node pair $u, v$.
\end{definition} 
When a pair of nodes $\{a, b\}$ is distinguishable, this means that the shortest path distances from seed $s$ to $a$ and from $s$ to $b$ differ by more than one. In that case, there cannot be an edge between node $a$ and $b$. Suppose $d_G(s, a) < d_G(s,b)$: if there would have been an edge between node $a$ and $b$, this edge would have been in the shortest path from $s$ to $b$ and the difference between the two graph distances would be 1. Then, in the second step, all pairs of nodes that are not distinguishable are queried again to check whether there is an edge between them or not.\\

The proof that \textsc{Simple} fully reconstructs random regular graphs is heavily based on the fact that the graph is tree-like, which is not the case for GRGs. Thus, to analyze the query complexity of the \simple algorithm for GRGs, we use results from \cite{dani2023reconstruction} to leverage the connection between the graph distance and the Euclidean distance to reconstruct the graph.

\begin{theorem}[Theorem 3.9 from \cite{dani2023reconstruction}]\label{thm:lowerbound_upperbound}
    There exist absolute constants $C_1, C_2$ and $c$ such that, for all $n \geq 1$, all $r \geq C_1\sqrt{\log n}$, with probability at least $1 - C_2/n^2$, all pairs of nodes $u,v$ satisfy
    \begin{align}
        \left\lfloor \frac{d_E(u,v)}{r} \right\rfloor \leq d_G(u,v) \leq \left\lfloor \frac{d_E(u,v) + \kappa}{r}\right\rfloor,\label{eq:thm39}
    \end{align}
    where
    \begin{align}
        \kappa = c \left(\frac{d_E(u,v)}{r^{4/3}}+\frac{\log n}{r^{1/3}}\right).
    \end{align}
\end{theorem}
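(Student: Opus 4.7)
The bound splits into two inequalities of very different difficulty: the lower bound is immediate from the triangle inequality, whereas the upper bound requires constructing a short graph path that hugs the Euclidean segment $[u,v]$.

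\textbf{Lower bound.} If $u = w_0, w_1, \dots, w_k = v$ is any graph path realizing $k = d_G(u,v)$, then every edge satisfies $d_E(w_i,w_{i+1}) \leq r$ by the definition of $\mathcal{G}(n,r)$, so by the triangle inequality in the plane
\begin{equation*}
d_E(u,v) \;\leq\; \sum_{i=0}^{k-1} d_E(w_i, w_{i+1}) \;\leq\; k\, r \;=\; d_G(u,v)\, r.
\end{equation*}
Hence $d_G(u,v) \geq d_E(u,v)/r \geq \lfloor d_E(u,v)/r \rfloor$, deterministically on every realization of $\Phi$.

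\textbf{Upper bound, strategy.} Fix $u,v$ and write $L = d_E(u,v)$. I would consider a rectangular tube of half-width $w$ around the segment $[u,v]$ and partition it into consecutive boxes of length $\ell$ along the segment. Since $\Phi$ has intensity $1$, the probability that a given box of area $\ell w$ contains no point is $e^{-\ell w}$, and a union bound over the $\lceil L/\ell\rceil$ boxes shows that all of them are nonempty except with probability at most $(L/\ell)\, e^{-\ell w}$. Selecting one point per box and chaining them yields a candidate sequence $u = x_0, x_1, \dots, x_m = v$ of length $m \approx L/\ell$, and the geometry of adjacent boxes forces $d_E(x_i,x_{i+1}) \leq \sqrt{(2\ell)^2 + (2w)^2}$, which is an edge of $\mathcal{G}(n,r)$ as soon as $\ell^2 + w^2 \lesssim r^2/4$. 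Comparing $m$ to $L/r$ gives an extra-hop count of order $Lw^2/r^3$, so the path wastes $O(L w^2/r^2)$ units of Euclidean length beyond $L$.

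\textbf{Parameter optimization and main obstacle.} The delicate step is choosing $w$ (and hence $\ell$) to match the two terms of $\kappa$. To make each box nonempty with probability compatible with a final union bound over the $\binom{n}{2}$ pairs $(u,v)$, one needs $\ell w \gtrsim \log n$; to force the per-step detour to contribute only $O(L/r^{7/3})$ to the hop count, one needs $w \sim r^{1/3}$, which gives box area $\ell w \sim r^{4/3}$ and is admissible as soon as $r^{4/3} \gg \log n$, i.e.\ precisely in the regime $r \geq C_1\sqrt{\log n}$ of the statement. This accounts for the $d_E(u,v)/r^{4/3}$ term in $\kappa$; the additive $\log n / r^{1/3}$ correction then arises from the two "endpoint" boxes that must connect to $u$ and $v$ themselves, plus the logarithmic slack in the union bound. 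I expect the main obstacle to be making the one-scale box argument sharp enough to yield the $r^{4/3}$ exponent rather than a weaker power of $r$: a naïve rectangular tube already loses a factor of $2$ in hop count (since adjacent boxes can have points up to $2\ell$ apart along the line), so one must refine the construction, either by staggering / overlapping the boxes so that the representative points lie near the shared face, or by a multi-scale argument that first traverses coarsely and then interpolates.

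\textbf{From a single pair to all pairs.} Finally, one fixes a sufficiently fine deterministic grid of representative pairs in $[0,\sqrt{n}]^2$, applies the above to each pair with failure probability $O(1/n^4)$, union-bounds over the $O(n^4)$ grid pairs, and extends to arbitrary $u,v \in \Phi$ by monotonicity of the floor functions under small perturbations. This yields the global $1 - C_2/n^2$ guarantee stated in \eqref{eq:thm39}.
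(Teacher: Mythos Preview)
The paper does not give its own proof of this theorem; it quotes it from \cite{dani2023reconstruction} and only summarises the cited argument. That summary, however, differs from your plan in a way that matters. The original proof is \emph{not} a tube--and--box construction but a greedy routing: starting from $x_0=u$, one repeatedly takes $x_{i+1}$ to be the neighbour of $x_i$ that is closest to $v$, and bounds the per-step progress by asking whether the lune $B(x_i,r)\cap B(v,a_t)$ contains a point of $\Phi$, with the target radius $a_t$ chosen so that this lune has area exactly $\ln 2$. The lune geometry is what produces the $4/3$: a cap of height $\delta$ in a disk of radius $r$ has area of order $r^{1/2}\delta^{3/2}$, so constant area forces $\delta\sim r^{-1/3}$, each successful step advances $r-O(r^{-1/3})$ towards $v$, and a concentration bound over the independent steps turns this into the $d_E(u,v)/r^{4/3}$ and $\log n/r^{1/3}$ contributions to $\kappa$.

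Your rectangular-box route has a concrete gap at the parameter step. With $w\sim r^{1/3}$ and $\ell\sim r$ the box area is $\Theta(r^{4/3})$, and you need this to be $\gtrsim\log n$ for the union bound; but $r^{4/3}\gtrsim\log n$ means $r\gtrsim(\log n)^{3/4}$, \emph{not} $r\geq C_1\sqrt{\log n}$ as you assert. In the range $\sqrt{\log n}\lesssim r\ll(\log n)^{3/4}$ your boxes are too small to be reliably occupied and the argument breaks. More structurally, the reason the greedy proof reaches the stated threshold is that it uses \emph{constant}-area lunes rather than $\log n$--area cells and controls the accumulated shortfall by large deviations over the random successes, instead of demanding that every cell be nonempty. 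Your proposed fixes (staggering the boxes, a multi-scale pass) address the factor-of-two slack you flagged but do not repair this scaling mismatch; to match $\kappa$ down to $r\sim\sqrt{\log n}$ you would essentially have to replace the rectangles by cap-shaped regions, at which point you have reinvented the greedy argument.
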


This theorem was proven for random geometric graphs on the square $[0, \sqrt{n}]^2$. The proof is based on a greedy algorithm in which the authors construct a \emph{greedy path} between the nodes $u$ and $v$. Starting from node $x_0= u$, they choose node $x_{i+1}$ as the neighbor of $x_{i}$ that has the smallest distance to node $v$. The authors then show that the length of this constructed path from $u$ to $v$ is at most the upper bound in \eqref{eq:thm39}. This upper bound is based on the probability that there is a node in the intersection of the ball with radius $r$ around $x_i$ and the ball with radius $a_t$ around $v$, where $a_t$ is chosen such that the intersection area is $\ln(2)$. Since none of the methods and concepts used in this proof are based on the boundary conditions of the square area, the theorem is not specific to geometric random graphs on a square but also holds when the underlying space is toroidal. \\

We rewrite Theorem \ref{thm:lowerbound_upperbound} to provide an upper bound and a lower bound for the Euclidean distance between nodes $u$ and $v$, given a constant $c$ and graph distance $t \coloneqq d_G(u,v)$. This yields
\begin{align}
    d_E(u,v) &\geq \frac{r\left(tr^{4/3} - c\log(n)\right)}{c + r^{4/3}} \eqqcolon \ell_n(t), \label{eq:lowerbound}\\
    d_E(u,v) &\leq r(t+1) \eqqcolon u_n(t), \label{eq:upperbound}
\end{align}
for $t \geq 1$. \\

We now introduce our main results about the query complexity of GRG reconstruction using the \textsc{Simple} algorithm \cite{mathieu2023simple}.


\begin{theorem}\label{thm:main}
Consider a connected geometric random graph $\mathcal{G}(n,r)$ on $[0,\sqrt{n}]^2$ with torus boundary conditions and connection radius $r > 0$.
\begin{itemize}
    \item When $r \sim n^k$, $0 < k < \frac{1}{2}$, the \textsc{Simple} algorithm gives with high probability a full reconstruction of $\mathcal{G}(n,r)$ in $\tilde{O}\left(n^{2k+1} \right)$ distance queries if $\frac{3}{20} \leq k < \frac{1}{2}$ and $\tilde{O}\left(n^{3/2 - 4k/3}\right)$ distance queries if $0 < k < \frac{3}{20}$. 
    \item When $r = o(\sqrt{n}), r \geq C_1 \sqrt{\log(n)}$ for an absolute constant $C_1$, the \textsc{Simple} algorithm gives with high probability a full reconstruction of $\mathcal{G}(n,r)$ in $\tilde{O}\left(n^{3/2} \right)$ distance queries. 
\end{itemize}
\end{theorem}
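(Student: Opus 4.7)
The query complexity of \simple decomposes as $sn + |\hat E|$, where $|\hat E|$ contains every edge plus every non-edge that no seed distinguished. I would pick $s$ so that $sn$ and the bound on $|\hat E|$ both fall below the target. The edge count is an easy warm-up: Poisson concentration on disks of radius $r$ gives $|E| = (1+o(1))\tfrac{\pi nr^2}{2} = \tilde{O}(n^{1+2k})$ w.h.p.

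Bounding the indistinguishable non-edges is the heart of the proof. Applying Theorem \ref{thm:lowerbound_upperbound} to the pairs $(s,a)$ and $(s,b)$, the condition $|d_G(s,a)-d_G(s,b)|\leq 1$ forces the Euclidean inequality
\[
|d_E(s,a)-d_E(s,b)| \leq 2r + \kappa, \qquad \kappa = O\!\left(\tfrac{d_E(s,\cdot)}{r^{4/3}} + \tfrac{\log n}{r^{1/3}}\right).
\]
I would split non-edge pairs by their Euclidean distance $D := d_E(a,b)$. For \emph{long} pairs, with $D$ much larger than the strip thickness $\delta = O(r + \sqrt n / r^{4/3})$, the non-distinguishing seeds lie in a thickened perpendicular bisector of $\overline{ab}$ of area at most $\tilde{O}(r\sqrt n + n/r^{4/3})$, so the per-seed non-distinguishing probability is $p = \tilde{O}(\max(n^{k-1/2}, n^{-4k/3}))$. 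With $s$ independent seeds a union bound gives $\tilde{O}(n^2 p^s)$ surviving long pairs in expectation, and a Chernoff tail lifts this to w.h.p.

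For \emph{short} pairs with $r < D \leq \delta$ the Euclidean bound is vacuous, and I would localize the analysis. A seed at distance $y \lesssim (D-2r)\, r^{4/3}$ from $a$ has $\kappa(y) = O(D-2r)$, so its effective distinguishing threshold $2r+\kappa$ drops to $\lesssim D$, which is met on a cone-shaped region around $a$ of Euclidean area $\sim (D-2r)^2 r^{8/3}$. A short pair at $D = 2r+u$ therefore survives all $s$ seeds with probability $\approx \exp(-s\, u^2 r^{8/3}/n)$; integrating over $u \in [0,\delta]$ bounds the expected number of surviving short pairs by $\tilde{O}(n^2/(s\, r^{8/3}))$. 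Balancing this against $sn$ forces $s = \tilde{O}(n^{1/2 - 4k/3})$ and a short-pair contribution of $\tilde{O}(n^{3/2 - 4k/3})$. Combined with $|E|$, the total query complexity is $\tilde{O}(\max(n^{1+2k}, n^{3/2-4k/3}))$; the crossover is exactly at $k = 3/20$, giving the first bullet of the theorem. The second bullet follows by the same analysis with the looser target $\tilde{O}(n^{3/2})$ in place of the $k$-dependent one.

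The main obstacle I anticipate is the short-pair argument: the Euclidean bisector bound is blunt precisely when $D$ is just above $r$, so the proof has to replace it with a careful polar-coordinate count of seeds close to $\{a,b\}$ that benefit from a locally small $\kappa$. A secondary challenge is upgrading the per-pair expectation bounds into a single w.h.p.\ statement over all $\binom{n}{2}$ candidate pairs via a Chernoff/union argument; seed-independence makes this tractable but requires care in handling the correlation between the "surviving" events of overlapping pairs.
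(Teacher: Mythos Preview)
Your route is genuinely different from the paper's. The paper never computes a per-pair survival probability or integrates over $D=d_E(a,b)$. Instead it fixes a single scale $d=n^{a}$ and shows (Lemma~\ref{lem:4seeds} together with Corollary~\ref{cor:angles}) that with seed intensity $\lambda_s=\log(n)\,n^{\epsilon-1}$ and $2a+\epsilon\ge 1$, \emph{every} vertex $u$ simultaneously carries a seed in each of four fixed $\tfrac{\pi}{4}$-sectors of the disk of radius $d$ around it. Lemma~\ref{lemma:simple_distinguishable} then proves that whichever of these four seeds lies at angle $\ge 5\pi/8$ from the direction $u\to v$ distinguishes $u$ from every $v$ with $d_E(u,v)>r+2w_n(\tmax)\log n$, where $w_n(\tmax)=cd/r^{4/3}+r+c\log n/r^{1/3}$. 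Thus $\hat E$ is contained in the set of pairs below this \emph{uniform} Euclidean threshold; a Chernoff bound counts those, and one optimises $(a,\epsilon)$ jointly over the three competing exponents $n^{1+\epsilon}$, $n^{2a+1-8k/3}$, $n^{1+2k}$. No short/long split and no integral appear.

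Your survival-probability approach can be made to work, but the step that carries all the weight---the claim that the distinguishing region near $a$ has area $\sim(D-2r)^{2}r^{8/3}$---is not established by the argument you give. You reason that any seed with $y:=d_E(s,a)\lesssim(D-2r)r^{4/3}$ has $\kappa(y)=O(D-2r)$, so the Euclidean threshold $2r+\kappa$ drops below $D$, and that the angular condition then carves out a constant-fraction ``cone'' of the $y$-disk. The second inference fails: for $y\gg D-2r$ the requirement $d_E(s,b)-d_E(s,a)>2r+\kappa$ forces $s$ into an arc of angular width only $O(\sqrt{(D-2r)/y})$ about the ray from $a$ away from $b$, so the cone narrows and the disk of radius $(D-2r)r^{4/3}$ is nowhere near filled. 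What \emph{is} true---and is exactly the content of Lemma~\ref{lemma:simple_distinguishable} read in reverse---is that a full $\tfrac{\pi}{4}$-sector of radius $\Theta(Dr^{4/3}/\log n)$ around $a$ consists of distinguishers, giving area $\Theta(D^{2}r^{8/3}/\log^{2}n)$, valid once $D$ exceeds a floor of order $r\log n$ coming from the $r$ and $\log n/r^{1/3}$ terms in $\kappa$. Below that floor no seed is guaranteed to distinguish, and those $\tilde O(nr^{2})$ pairs (which include the range $r<D\le 2r$ your integral over $u\ge 0$ silently drops) must be counted separately. With this corrected area your integration and balancing do recover the stated exponents---but at that point the angular-sector lemma has done all the work, and choosing a single $d$ as the paper does is both shorter and avoids the w.h.p.\ upgrade you flag as a secondary obstacle.
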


With this theorem, we extend the capabilities of the \simple algorithm to a specific type of random graphs. We show that the \simple algorithm fully reconstructs dense geometric random graphs with growing degree in an almost optimal number of distance queries. Indeed, the GRG contains $\tilde{O}(n^{2k+1})$ edges in expectation. Compared to the guaranteed query complexity of $\tilde{O}(n^{5/3})$ of the \simple algorithm, our complexity bound for geometric random graphs shows an enormous improvement. For sparser GRGS with connection radius $r = O(\sqrt{\log(n)})$, we show that the \simple algorithm fully reconstructs the graphs in $\tilde{O}(n^{3/2})$. This result is again an improvement on the complexity bound of the \simple algorithm, and matches the complexity bound described in \cite{kannan2018graph} for bounded-degree graphs. Here, we give an outline of the proof of Theorem \ref{thm:main}. In Section \ref{sec:lemmas} we provide the necessary lemmas and give a proof. The proof is based on two parts:
\begin{enumerate} 
    \item For every node $u \in V$, we show that there exist at least four seeds that are at most Euclidean distance $d$ from node $u$ (Lemma \ref{lem:4seeds}), where the value of $d$ depends on $k$.
    \item We then show in Lemma \ref{lemma:simple_distinguishable} that after the initialization step in \simple, all node pairs that have Euclidean distance more than $r + 2 w_n(\tmax) \log(n)$ are distinguished by the seed set, where $\tmax$ is the maximal graph distance given Euclidean distance $d$. Consequently, in the second step of the algorithm only the pairs of nodes that are at Euclidean distance of at most $r+ 2\wmax \log(n)$ have to be queried.
\end{enumerate}

By combining the two observations, we can determine the number of queries that is needed to fully reconstruct the graph. This number is equal to the number of seed queries, i.e., the number of seeds times the number of nodes combined with the number of queries in the second step of the algorithm.\\

The results in Theorem \ref{thm:main} are only almost edge-optimal for dense GRGs as we rely in the proof on a number of seeds of order $n^\epsilon$, with $\epsilon < 1$ non-negative. 
However, our simulations show that in practice often only four seeds are enough to fully reconstruct the graph in an almost edge-optimal number of queries, both in sparse ($r = O(\sqrt{\log(n)})$) and in dense ($r \sim n^k$) GRGs (Figure \ref{fig:seed_size}). Although Theorem~\ref{thm:main} does not show full reconstruction in edge-optimal time for sparse graphs with $r = O(\sqrt{n})$, we can show that with only $n^{1+o(1)}\log(n)$ queries, the \simple algorithm detects at least $75\%$ of the non-edges with high probability:

\begin{theorem}\label{lem:connection_to_theorem}
     Consider a geometric random graph $\mathcal{G}(n,r)$ on $[0,\sqrt{n}]^2$ with torus boundary conditions and $r > C_1\sqrt{\log(n)}$ for a constant $C_1>0$. With a seed set $S$ consisting of $|S| = \log(n)n^{\epsilon}, 0 < \epsilon < 1$ randomly chosen nodes, the first step of the \simple algorithm can detect at least $75\%$ of the non-edges in expectation. 
\end{theorem}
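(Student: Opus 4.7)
The plan is to reduce the statement to a counting argument on a narrow Euclidean annulus. By the forthcoming Lemma \ref{lemma:simple_distinguishable}, whose content is sketched in the outline of Theorem \ref{thm:main}, after the initial seed loop of \simple every pair $(u,v)$ with $d_E(u,v) > r + 2 w_n(\tmax)\log n$ is distinguished. Consequently, every non-edge that the first step fails to detect has Euclidean distance in the annulus $(r,\, r + W^\star]$ with $W^\star := 2 w_n(\tmax)\log n$, and the theorem reduces to showing that in expectation this annulus contains at most one quarter of the non-edges.

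Next, I would evaluate both expectations by applying the Slivnyak--Mecke formula to the underlying intensity-$1$ Poisson point process on the torus $[0,\sqrt n]^2$. This yields an expected $\pi n W^\star(r+W^\star/2)$ pairs in the annulus, while the expected number of non-edges equals $\binom{n}{2}-\pi r^2 n/2$, which is $(1-o(1))\,n^2/2$ whenever $r = o(\sqrt n)$. The theorem is then equivalent to the inequality $\pi W^\star(r + W^\star/2) \leq (1-o(1))\, n/8$.

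The main obstacle is controlling $W^\star$ uniformly over the admissible seed sizes. My plan is to feed Lemma \ref{lem:4seeds} with seed density $|S|/n = \log(n)\, n^{\epsilon-1}$: a PPP void-probability estimate plus a union bound over the $n$ nodes gives a nearest-seed Euclidean distance $d = O(n^{(1-\epsilon)/2})$ with high probability. Inverting Theorem \ref{thm:lowerbound_upperbound} yields $\tmax = O(d/r)$, and reading off the width $u_n(t)-\ell_n(t)$ from \eqref{eq:lowerbound}--\eqref{eq:upperbound} gives $w_n(\tmax) = O\bigl(r + (\tmax+\log n)/r^{1/3}\bigr)$. Substituting and multiplying by $r$ then shows $W^\star(r+W^\star/2) = o(n)$ for every fixed $\epsilon \in (0,1)$, which delivers the required inequality for $n$ large enough.

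The hard part is really the low-density regime where $\epsilon$ is close to $0$: there the seed set is sparse, the nearest-seed distance $d$ is almost $\sqrt n$, and $W^\star$ is correspondingly large. The necessary slack comes from the fact that the total pool of non-edges remains $\Theta(n^2)$ while $W^\star(r + W^\star/2)$ grows only as a power of $n$ with exponent strictly less than $2$, so the ratio of ambiguous to total non-edges vanishes asymptotically, comfortably yielding the $\geq 75\%$ detection in expectation.
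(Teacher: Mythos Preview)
Your approach is correct but takes a genuinely different route from the paper. The paper does \emph{not} recycle Lemma~\ref{lemma:simple_distinguishable} here; instead it introduces four explicit \emph{optimal seed locations} on the torus via a circle-packing result (Lemma~\ref{def:optimalseeds}), proves directly in Lemma~\ref{thm:non-edges} that seeds placed exactly at those four points distinguish at least $75\%$ of the non-edges by summing the products $|\bar L_s(t)|\cdot|U_s(t)|$ over disjoint annuli up to a cut-off $\tilde t$ determined by the packing radius, and then shows (Lemma~\ref{lem:dense-regime_seeds}) that a random seed set of size $\log(n)n^{\epsilon}$ contains with high probability one seed within distance $n^b$ of each of the four optimal locations; redoing the annulus sum with the displaced seeds reproduces the same $75\%$ limit. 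Your argument bypasses the packing construction entirely: once Lemma~\ref{lemma:simple_distinguishable} is invoked with $d=n^{(1-\epsilon)/2}$ (so that the hypothesis $2a+\epsilon\ge 1$ of Lemma~\ref{lem:4seeds} and Corollary~\ref{cor:angles} is met with equality), every undetected non-edge sits in the Euclidean annulus $(r,\,r+W^\star]$, and a second-moment Mecke computation shows this annulus carries an $o(1)$ fraction of all non-edges for every fixed $\epsilon>0$. Your route is shorter and in fact yields the stronger conclusion that $1-o(1)$ of the non-edges are detected, not merely $75\%$. What the paper's route buys is an explicit, dimension-free constant and a clean separation of roles: only four near-optimally-placed seeds do the distinguishing, while the remaining $\log(n)n^{\epsilon}-4$ seeds serve solely to guarantee (via the void probability) that some seed lands in each of the four target discs---a point the authors lean on when interpreting their simulations with $|S|=4$.
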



Here, we sketch an outline of the proof given in Section \ref{sec:non-edges}. We first define four \textit{optimal} locations for four seed nodes on the torus that maximize the distance between any two seeds in Lemma \ref{def:optimalseeds}. 
In Lemma \ref{thm:non-edges} we show that we can detect at least $75\%$ of the non-edges given the presence of four seeds at the optimal seed locations. \\

Then, in Lemma \ref{lem:dense-regime_seeds}, we show that there is a seed in $S$ that is at most Euclidean distance $y$ apart from each optimal location when $|S| = n^{\epsilon}\log(n)$. Finally, combining Lemmas \ref{thm:non-edges} and \ref{lem:dense-regime_seeds} we show that these near-optimal seed locations result in at least $75\%$ of the non-edges detected.

\section{Proof of Theorem \ref{thm:main}}\label{sec:lemmas}
In this section, we prove Theorem \ref{thm:main} using the following two lemmas. As a seed set $S$, we choose a random sample of all nodes in $V$. The set $S$ can be regarded as a thinning of the Poisson point process underlying the GRG with resulting intensity $\lambda_s$ for some $\lambda_s<1$.\\

The following lemma shows that for all nodes in the GRG, we can find four seed nodes in $S$ that are at most at Euclidean distance $d$ from that respective node.

\begin{lemma}\label{lem:4seeds}
    Suppose that $\Phi$ is a Poisson point process on the torus $[0, \sqrt{n}]^2$ with intensity $1$ and that $S$ is a seed set consisting of a thinning of $\Phi$ with resulting intensity $\ls = \log(n)n^{\epsilon -1}$ for some $\epsilon \in (0, 1)$. Let $d = n^a$ for $a \in (0,1/2)$. For all $v \in V$, when $2a + \epsilon \geq 1$, with high probability there exists a set $S_v \subseteq S$ with $|S_v| \geq 4$ such that such that $d_E(v,s) \leq d$ for all $s \in S_v$ when $n \rightarrow \infty$. 
\end{lemma}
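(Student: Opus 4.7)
The plan is to apply a Palm/Slivnyak argument to reduce the statement to a standard Poisson concentration bound, and then take a union bound over $v \in V$ via the Mecke formula. I would first fix $v \in V$ and, by Slivnyak's theorem, view $\Phi \setminus \{v\}$ under the Palm measure at $v$ as an independent PPP of intensity $1$. Since seeds are obtained by an independent thinning of $\Phi$ with retention probability $\ls$, the seeds among $\Phi \setminus \{v\}$ form an independent PPP of intensity $\ls$. Because $d = n^a$ with $a < 1/2$, the Euclidean disk of radius $d$ around $v$ fits inside the torus for large $n$, so the number of seeds at distance at most $d$ from $v$ is Poisson distributed with mean
\[
\mu \;=\; \ls \cdot \pi d^2 \;=\; \pi \log(n)\, n^{\,2a+\epsilon-1}.
\]
The hypothesis $2a + \epsilon \geq 1$ then ensures $\mu \geq \pi \log n$.

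Next, I would apply the standard Chernoff tail bound for a Poisson variable, $\P(\mathrm{Poisson}(\mu) \leq k) \leq e^{-\mu}(e\mu/k)^k$ for $k < \mu$, with $k = 3$, which gives
\[
\P\bigl(\#\{\text{seeds in } B(v,d)\} < 4\bigr) \;=\; O\bigl((\log n)^3 \, n^{-\pi}\bigr).
\]
This is comfortably smaller than $1/n$, which is what is needed to beat a union bound over the $\Theta(n)$ nodes. The union bound itself is handled cleanly by the Mecke/Campbell formula: the expected number of $v \in V$ with fewer than $4$ seeds in the disk of radius $d$ around $v$ equals the integral over the torus of the Palm failure probability, which is $n \cdot O((\log n)^3 n^{-\pi}) = o(1)$. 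An application of Markov's inequality then shows that with high probability every $v \in V$ has at least $4$ seeds within Euclidean distance $d$, so $S_v$ can be defined as any four such seeds.

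The main obstacle is really the union bound over the random node set $V$; this is why the Mecke formula (rather than a naive discretization of the torus) is the natural tool, as it avoids losing a polylogarithmic factor from a grid argument and matches $V$ exactly. A secondary technical point to verify is that the thinning used to define the seed set is independent of the point positions, which guarantees that under the Palm measure at $v$ the seed set remains a Poisson process of intensity $\ls$; this is immediate from the construction but should be recorded. Once these ingredients are in place the argument reduces to a routine Poisson concentration calculation, and the critical threshold $2a+\epsilon \geq 1$ enters only through the bound $\mu \gtrsim \log n$ that makes the tail estimate beat the factor of $n$ from the union bound.
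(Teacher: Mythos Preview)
Your argument is correct and follows essentially the same route as the paper: compute the Poisson tail for the number of seeds in $B(v,d)$ and then union bound over all $v\in V$. The only difference is packaging---you invoke Slivnyak/Mecke to justify the Poisson law under the Palm measure and to handle the union bound over the random point set rigorously, whereas the paper writes the exact Poisson tail $e^{-\ls\pi d^2}\sum_{i=0}^3 (\ls\pi d^2)^i/i!$ directly and applies a plain union bound with $|V|\approx n$.
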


\begin{proof}
Let $v \in V$ and define $\B_v$ as the event that there are at least four seeds at Euclidean distance at most $d$ from node $v$. The probability of this event is
\begin{align}
    \P(\B_v) = \P(N_s( \pi d^2) \geq 4) &= 1-e^{-\ls \pi d^2} \sum_{i=0}^3 \frac{(\ls \pi d^2)^i}{i!} \label{eq:poisson}.
\end{align}
By the union bound, the probability that this event occurs for every point $v \in V$ is bounded by
\begin{align}
    \P\left(\bigcap_{v \in V}\B_v\right) = 1 - \P\left(\bigcup_{v \in V}\B^c_v\right) &\geq 1 - \sum_{v \in V}\P(\B^c_v) \\ 
    &= 1 - n e^{-\ls \pi d^2}\sum_{i=0}^3 \frac{(\ls \pi d^2)^i}{i!}\label{eq:Poisson_combined}.
\end{align}
As defined in the lemma, $d = n^a$ and $\ls = \log(n)n^{\epsilon -1}$. Therefore,
\begin{align}
    \P\left(\bigcap_{v \in V}\B_v\right) \geq 1 -  n^{1- \pi n^{2a + \epsilon - 1}}\sum_{i=0}^3 \frac{\left(\pi \log(n)n^{2a + \epsilon - 1}\right)^i}{i!}.\label{eq:asymptotics}
\end{align}

 When $n \rightarrow \infty$, \eqref{eq:asymptotics} tends to 1 as long as $2a + \epsilon \geq 1$. 
\end{proof}

In addition to guaranteeing that there are at least four seeds at Euclidean distance at most $d$ from every node $v \in V$, we also ensure that every seed node is located in another sector of the circle of radius $d$ around node $v$. Therefore, we specify the sector in which each of the four seeds (Figure \ref{fig:angles}) around the nodes in $V$ has to lie in the following corollary.

\begin{corollary}\label{cor:angles}
Lemma \ref{lem:4seeds} also holds for every $v \in V$ with the following extra condition on the configuration of the seeds $\{s_1, s_2, s_3, s_4\} \subseteq S_x$:
    \begin{align}
        \varphi(v,s_1) &\in \left[\frac{\pi}{2}  , \frac{3\pi}{4} + \alpha \right], \nonumber \\
        \varphi(v,s_2) &\in \left[\pi  , \frac{5\pi}{4} \right],\nonumber \\
        \varphi(v,s_3) &\in \left[\frac{3\pi}{2}, \frac{7\pi}{4} + \alpha \right],\nonumber \\
        \varphi(v,s_4) &\in \left[0 ,  \frac{\pi}{4} \right],\label{eq:angles}
    \end{align}
where $\varphi(v, s)$ is the angle between the horizontal axis and the line between node $v$ and seed $s$.
\end{corollary}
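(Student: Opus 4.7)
The plan is to adapt the proof of Lemma~\ref{lem:4seeds} by partitioning the disk of radius $d$ centered at each $v \in V$ into the four disjoint angular sectors prescribed by \eqref{eq:angles} and then exploiting the independence of Poisson counts on disjoint regions. The lemma already guarantees enough seeds nearby in total; I just need to refine the event so that each seed lives in its own prescribed sector.

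Concretely, for each $v \in V$ and $i \in \{1,2,3,4\}$, I would define $A_i(v)$ as the intersection of the open disk of radius $d$ centered at $v$ with the $i$-th angular sector in \eqref{eq:angles}. Since every sector has angular width at least $\pi/4$, the area of each $A_i(v)$ is at least $\pi d^2 / 8$. Let $\B_v$ denote the event that each $A_i(v)$ contains at least one seed of $\Phi_S$. Because $\Phi_S$ is a Poisson point process of intensity $\ls$ and the four sectors are pairwise disjoint, the counts $N_s(A_i(v))$ are independent Poisson random variables, and so
\begin{align}
    \P(\B_v) = \prod_{i=1}^4 \left(1 - e^{-\ls \cdot \mathrm{Area}(A_i(v))}\right) \geq \left(1 - e^{-\ls \pi d^2/8}\right)^4.
\end{align}
The elementary inequality $1 - (1-x)^4 \leq 4x$ for $x \in [0,1]$ then yields $\P(\B_v^c) \leq 4 e^{-\ls \pi d^2/8}$.

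Applying the union bound over $v \in V$ and substituting $d = n^a$ and $\ls = \log(n) n^{\epsilon-1}$ gives
\begin{align}
    \P\!\left(\bigcup_{v \in V} \B_v^c\right) \leq 4 n \cdot n^{-(\pi/8)\, n^{2a + \epsilon - 1}},
\end{align}
which tends to $0$ as $n \to \infty$ under essentially the same hypothesis $2a + \epsilon \geq 1$ as in Lemma~\ref{lem:4seeds}. On this high-probability event, choosing any one seed from each sector $A_i(v)$ yields four seeds in $S_v$ simultaneously satisfying $d_E(v, s) \leq d$ and the angular configuration \eqref{eq:angles}, proving the corollary.

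The main obstacle is essentially cosmetic: I need to verify that restricting the per-seed ``area budget'' from the full disk (area $\pi d^2$) to a single sector (area $\pi d^2/8$) only affects the constant multiplying $\log(n) n^{2a+\epsilon - 1}$ inside the exponent and does not degrade the threshold $2a + \epsilon \geq 1$ from the parent lemma. If one wants to be careful at the boundary $2a + \epsilon = 1$, the same partial-exponential-sum refinement used in the proof of Lemma~\ref{lem:4seeds} supplies an extra polylog factor that comfortably dominates the linear $n$ from the union bound, and so the corollary holds under the same assumptions on $a$ and $\epsilon$ as the original lemma.
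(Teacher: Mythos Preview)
Your argument is correct and follows essentially the same approach as the paper: partition the disk of radius $d$ into four disjoint sectors of area $\pi d^2/8$, use Poisson independence on disjoint regions, and apply a union bound over $v\in V$. The only difference is cosmetic---where the paper expands $\P(\tilde{\B}_v^c)$ case by case into the closed form $e^{-\frac{\pi}{8}\ls d^2}\bigl(5 - 6e^{-\frac{\pi}{8}\ls d^2} + 4e^{-\frac{\pi}{4}\ls d^2} - 2e^{-\frac{3\pi}{8}\ls d^2}\bigr)$, you short-circuit this with the cleaner union-bound estimate $\P(\B_v^c)\leq 4e^{-\ls\pi d^2/8}$, which yields the same asymptotic conclusion under $2a+\epsilon\geq 1$.
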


\begin{figure}[ht!]
    \centering
    \includegraphics[trim = 1.5cm 1.5cm 1.5cm 1.5cm,width=0.45\textwidth]{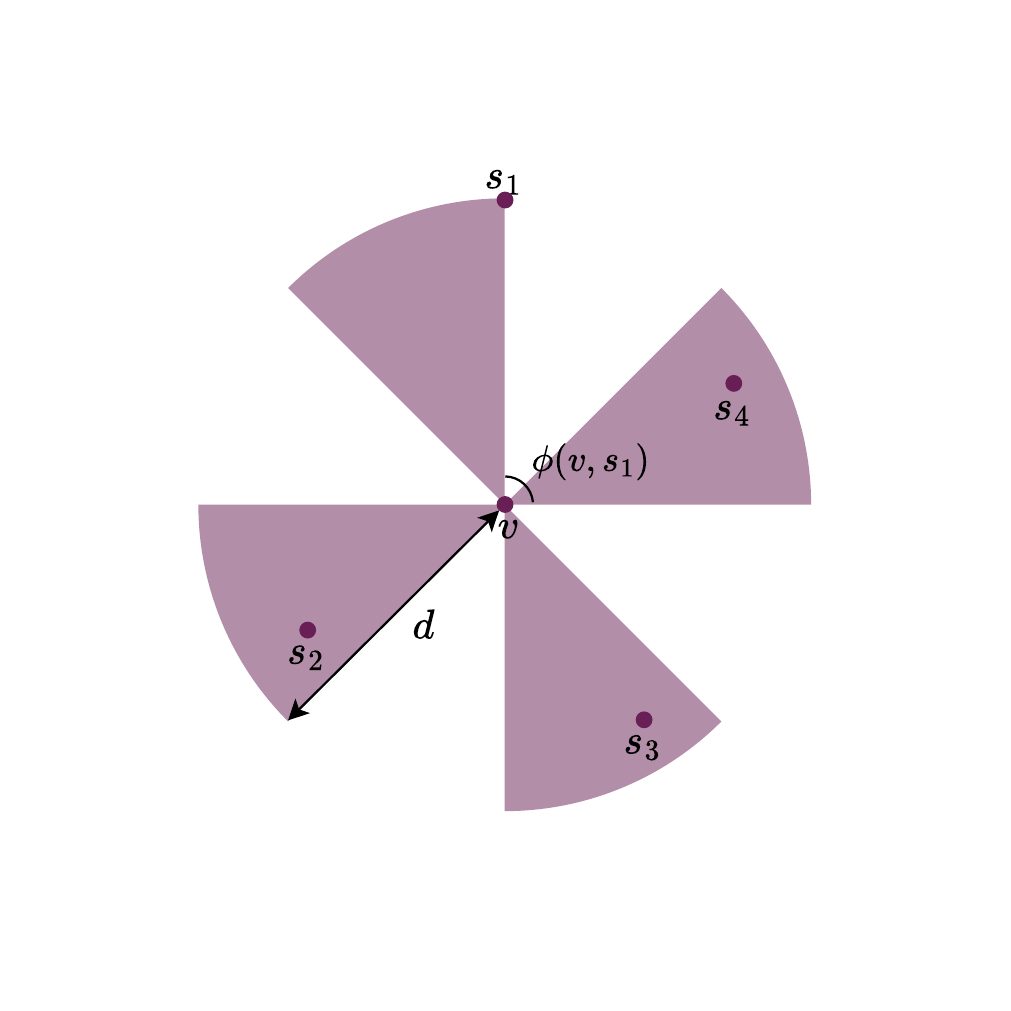}
    \caption{Regions in which the four seeds are located.}
    \label{fig:angles}
\end{figure}
\begin{proof}
As in Lemma \ref{lem:4seeds}, the seed nodes follow a PPP with intensity $\lambda_s$. However, instead of computing the probability that there are at least four seeds in a circle with radius $d$ around node $v$, we restrict these four seeds to be in a specific sector of the circle, where every sector has area $\pi/8 d^2$. We define $\tilde{\B}_v$ as the event that these four (disjoint) sectors of size $\pi/8 d^2$ all contain at least one seed:
\begin{align}
    \P\left((\tilde{\mathcal{B}}_v)^c\right) &= \P(\text{no four disjoint areas of size $\pi/8 d^2$ contain at least one seed}) \nonumber \\ 
    &= \P(\text{0, 1, 2 or 3 disjoint areas contain at least one seed})\nonumber \\ 
    &= \left(\P\left(N_s\left(\frac{\pi}{8}d^2\right)= 0\right)\right)^4 + {4\choose 3}\P(N_s\left(\frac{\pi}{8}d^2\right) \geq 1)\left(\P\left(N_s\left(\frac{\pi}{8}d^2\right) = 0\right)\right)^3  \nonumber \\ 
    &\hspace{1cm} + {4 \choose 2} \left(\P\left(N_s\left(\frac{\pi}{8}d^2\right) \geq 1\right)\right)^2\left(\P\left(N_s\left(\frac{\pi}{8}d^2\right) = 0\right)\right)^2  \nonumber \\
    &\hspace{1cm}+ {4 \choose 1} \left(\P\left(N_s\left(\frac{\pi}{8}d^2\right) \geq 1\right)\right)^3\P\left(N_s\left(\frac{\pi}{8}d^2\right) = 0\right) \nonumber \\
    &= e^{-\frac{\pi}{8} \ls d^2}\left(5 - 6 e^{-\frac{\pi}{8} \ls d^2} + 4e^{-\frac{\pi}{4} \ls d^2} - 2e^{-\frac{3\pi}{8}\ls d^2}\right).
\end{align}
Taking the union bound as in \eqref{eq:Poisson_combined} gives
\begin{align}
    \P\left(\bigcap_{v \in V}\tilde{\mathcal{B}}_v\right) &= 1 - \P\left(\bigcup_{v \in V}{\left(\tilde{\mathcal{B}}_v\right)^c}\right) \geq 1 -\sum_{x\in \Phi} \P\left(\left(\tilde{\mathcal{B}}_v\right)^c\right) \nonumber \\ 
    &= 1- ne^{- \frac{\pi}{8} \ls d^2}\left(5 - 6 e^{-\frac{\pi}{8} \ls d^2} + 4e^{-\frac{\pi}{4} \ls d^2} - 2e^{-\frac{3\pi}{8}\ls d^2}\right).
\end{align}
This probability tends to one as well when $2a + \epsilon \geq 1$. Thus, Lemma \ref{lem:4seeds} also holds for the given restriction on seed locations.
\end{proof}

Now, we show that every pair of node $u, v \in V$ has a distinguisher, which connects our previous observations to Line \ref{line:condition} of the \textsc{Simple} algorithm and gives an indication of the number of node pairs that are distinguishable given the seed set $S$.

\begin{lemma}\label{lemma:simple_distinguishable}
     For every pair of nodes $u, v\in V$ there is a seed $s \in S$ that can distinguish $u,v$ when $d_E(u,v) \geq 2w_n(\tmax)\log(n) + r$, where
     \begin{align}
        w_n(\tmax) &= \frac{c d}{r^{4/3}} + r + \frac{c \log(n)}{r^{1/3}},\label{eq:lemmaWmax}
     \end{align}
     for a constant $c>0$. We define $\tmax$ as the maximal graph distance between two nodes given Euclidean distance $d$
     \begin{align}
        \tmax = \frac{d(c+r^{4/3})}{r^{7/3}} + \frac{c\log(n)}{r^{4/3}}.\label{eq:tmax}
     \end{align}
\end{lemma}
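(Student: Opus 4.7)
The plan is to fix a pair $u, v \in V$ with $D \coloneqq d_E(u,v) \geq 2w_n(\tmax)\log(n) + r$ and exhibit a seed $s \in S$ for which $|d_G(s,u) - d_G(s,v)| > 1$. Corollary~\ref{cor:angles} provides, with high probability, four seeds within Euclidean distance $d$ of $u$, one in each of the four angular sectors described in~\eqref{eq:angles}; these four seeds are the candidate distinguishers.

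First I carry out a purely geometric step. Let $\theta$ be the angle from $u$ to $v$. By the windmill arrangement of the four sectors (Figure~\ref{fig:angles}), the diametrically opposite direction $\theta+\pi$ falls in or close to at least one of the sectors, so the seed $s \in S$ sitting in that sector is approximately antipodal to $v$ as seen from $u$. A short application of the law of cosines to the triangle with vertices $s, u, v$ then yields a lower bound of the form
\[
d_E(s,v) \;\geq\; d_E(s,u) + D - O\bigl(d_E(s,u)\bigr),
\]
and in any case the triangle inequality gives $d_E(s,v) \geq D - d$.

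Second, I convert this Euclidean separation into a graph-distance separation through Theorem~\ref{thm:lowerbound_upperbound}. Because $d_E(s,u) \leq d$, the upper bound in \eqref{eq:thm39} combined with the very definition \eqref{eq:tmax} of $\tmax$ gives $d_G(s,u) \leq \tmax$, while the lower bound in \eqref{eq:thm39} yields $d_G(s,v) \geq \lfloor d_E(s,v)/r \rfloor$. Hence
\[
d_G(s,v) - d_G(s,u) \;\geq\; \left\lfloor \frac{d_E(s,v)}{r} \right\rfloor - \tmax,
\]
and unwinding the algebraic identity $r(\tmax+1) = d + w_n(\tmax)$ that follows from \eqref{eq:lemmaWmax}--\eqref{eq:tmax}, this right-hand side exceeds $1$ as soon as $d_E(s,v) \geq d + w_n(\tmax) + r$. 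The hypothesis $D \geq 2w_n(\tmax)\log(n) + r$, together with $d_E(s,v) \geq D - d$ and the fact that $d = n^a$ is dominated by $w_n(\tmax)\log(n)$ in the parameter regime of Lemma~\ref{lem:4seeds}, closes this estimate.

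The main obstacle is the geometric step: making precise why the windmill configuration of Corollary~\ref{cor:angles}, with its deliberate angular gaps, still guarantees for \emph{every} direction $\theta$ from $u$ to $v$ the existence of a seed $s$ such that $d_E(s,v) - d_E(s,u)$ is essentially $D$. This reduces to a small case analysis on where $\theta+\pi$ can lie among the four sectors and their gaps, using the specific widths in \eqref{eq:angles}. Once this is settled, the rest of the proof is routine algebra with the definitions of $\tmax$ and $w_n(\tmax)$ and a direct application of Theorem~\ref{thm:lowerbound_upperbound}.
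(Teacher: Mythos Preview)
Your outline is the same as the paper's: use the windmill seeds of Corollary~\ref{cor:angles}, pick the one roughly opposite to $v$, and compare the upper bound on $d_G(s,u)$ with the lower bound on $d_G(s,v)$ from Theorem~\ref{thm:lowerbound_upperbound}. The quantitative closing, however, breaks down.

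The decisive error is the claim that ``$d=n^a$ is dominated by $w_n(\tmax)\log(n)$''. This is false in the parameter ranges actually used. For example, with $r=n^k$, $k=0.1$, the proof of Theorem~\ref{thm:main} takes $a=\tfrac14+\tfrac{2k}{3}\approx 0.317$, whereas $w_n(\tmax)=cd/r^{4/3}+r+c\log(n)/r^{1/3}\sim c\,n^{a-4k/3}\approx c\,n^{0.183}$; hence $d\gg w_n(\tmax)\log(n)$, the threshold $D=2w_n(\tmax)\log(n)+r$ is much smaller than $d$, and your triangle-inequality fallback $d_E(s,v)\geq D-d$ is negative and vacuous. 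For the same reason, bounding $d_G(s,u)$ by $\tmax$ (the upper bound at Euclidean distance exactly $d$) is too coarse: that would force $d_E(s,v)\gtrsim r\,\tmax\approx d$, which no seed at distance $\tilde d\leq d$ from $u$ can satisfy when $v$ is only $D\ll d$ away.

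What the paper does, and what your sketch is missing, is to keep the \emph{actual} distance $\tilde d=d_E(s,u)$ simultaneously in the upper bound for $t_1=d_G(s,u)$ (equation~\eqref{eq:upperboundt1}) and in the law-of-cosines expression for $d_E(s,v)$ (equation~\eqref{eq:dE}), using the worst-case angle $\alpha=5\pi/8$ that the windmill guarantees. In the difference $d_E(s,v)^2-(r(t_1+3))^2$ the dominant $\tilde d^{\,2}$ contributions then cancel, leaving cross terms of order $\tilde d\cdot D\sin(\pi/8)$ on the positive side against terms of order $\tilde d\cdot w_n(\tmax)$ on the negative side; the extra $\log(n)$ in $D\geq 2w_n(\tmax)\log(n)$ is exactly what tips this balance. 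Your ``$d_E(s,v)\geq d_E(s,u)+D-O(d_E(s,u))$'' does not capture this cancellation, and the subsequent switch to the triangle inequality throws it away entirely.
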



\begin{proof}[Proof]
We choose a seed $s\in S$ that is Euclidean distance $\tilde{d} \leq d$ away from node $u$ and as far as possible from node $v$. We define $t_1 = d_G(u,s)$ and $t_2 = d_G(v, s)$, where $t_1 \leq t_2$. The seed $s$ distinguishes $u$ and $v$ if $|t_1 - t_2| > 1$, which means that $t_1 + 2 \leq t_2$. By using Theorem \ref{thm:lowerbound_upperbound} to bound $t_1$ and $t_2$ in terms of the Euclidean distances we obtain
\begin{align}
    t_1 &\leq \left\lfloor\frac{\tilde{d}}{r} + \frac{c \left(\frac{\tilde{d}}{r^{4/3}}+\frac{\log n}{r^{1/3}}\right)}{r}\right\rfloor = \frac{\tilde{d}}{r}\left(1 + \frac{c}{r^{4/3}}\right) + \frac{c\log(n)}{r^{4/3}}  \label{eq:upperboundt1},\\
    t_2 &\geq \left\lfloor\frac{d_E(s,v)}{r}\right\rfloor \geq \frac{d_E(s,v)}{r} - 1,\label{eq:lb2}
\end{align}
given a constant $c > 0$.

Now, we show that $t_1 + 2 \leq t_2$, which can be achieved by showing that $r(t_1+3) \leq d_E(s,v)$ in light of~\eqref{eq:lb2}. We first rewrite the left-hand side of this inequality to
\begin{align}
    r(t_1+3) &\leq \tilde{d}\left(1 + \frac{c}{r^{4/3}}\right) + \frac{c \log(n)}{r^{1/3}} + 3 r.\label{eq:rt1}
\end{align}

The Euclidean distance between node $v$ and seed $s$ is
\begin{align}
    d_E(s, v) = \sqrt{\tilde{d}^2 + (2\wmax \log(n) + r)^2 - 2\tilde{d}^2 (2\wmax \log(n) + r) \cos(\alpha)},
\end{align}
where $\alpha$ denotes the angle between the line $u,v$ and $u, s$ (Figure \ref{fig:worst_seed}). Since the four seeds are located in the four regions defined in Corollary \ref{cor:angles}, we compute an upper bound of $d_E(s, v)$ when $\alpha = \frac{5\pi}{8}$,
 \begin{align}
    d_E(s,v) &\leq \sqrt{\tilde{d}^2 + (2\wmax\log(n) + r)^2 - 2\tilde{d}(2\wmax\log(n) + r)\cos\left(\frac{5\pi}{8}\right)}\nonumber \\ 
    &= \sqrt{\tilde{d}^2 + (2\wmax\log(n) + r)^2 +2\tilde{d}(2\wmax\log(n) + r)\sin\left(\frac{\pi}{8}\right)}.\label{eq:dE} 
\end{align}

\begin{figure}[tbp]
    \centering
    \includegraphics[trim=2cm 2cm 2cm 2cm,width=0.4\textwidth]{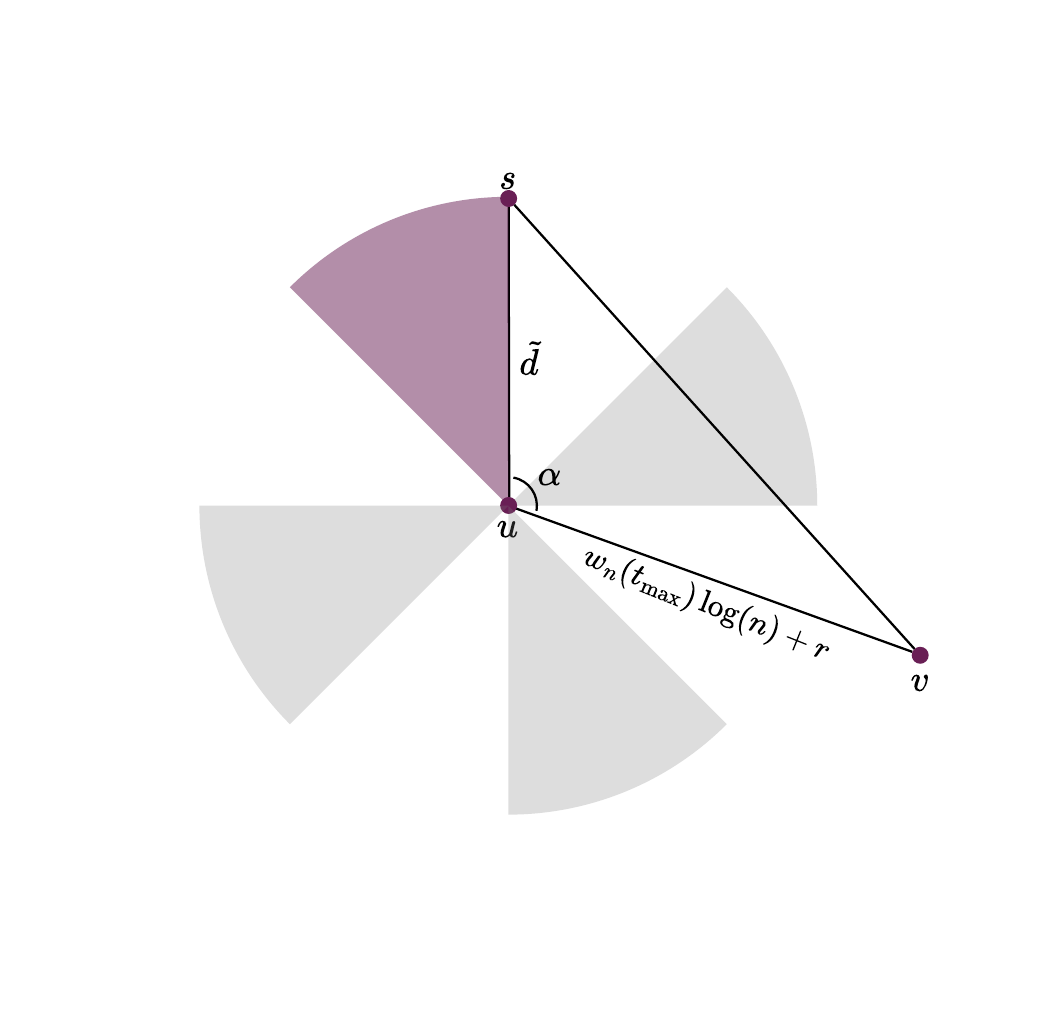}
    \caption{Worst case location of the furthest seed $s$ relative to $v$. }
    \label{fig:worst_seed}
\end{figure}
   

The difference between the square of \eqref{eq:dE} and \eqref{eq:rt1} is equal to
\begin{align}
    d_E(s,v)^2 - \left(r(t_1+3)\right)^2 &=4 r^2 \left(\log ^2(n)+\log (n)-2\right) +2\tilde{d} r \left(\sin \left(\frac{\pi }{8}\right) (2 \log (n)+1)-3\right)\nonumber \\ 
    &\hspace{0.5cm}+2 c r^{2/3} \left(4 \log ^2(n)+2 \log (n)-3\right) \log (n)\nonumber \\ 
    &\hspace{0.5cm}+ \frac{2 c \tilde{d}}{r^{1/3}} \left(2\log(n) \left(\frac{2d}{\tilde{d}}\log(n) + \sin\left(\frac{\pi}{8}\right) + \left(\frac{d}{\tilde{d}}-1\right)\right) -3\right)\nonumber \\
    &\hspace{0.5cm}+\frac{2 c \tilde{d}^2 }{r^{4/3}}\left(\frac{2d}{\tilde{d}} \sin \left(\frac{\pi }{8}\right) \log (n)-1\right)\nonumber\\
    &\hspace{0.5cm}+\frac{c^2 }{r^{8/3}}\big(4 d \log ^2(n)+h^2+2 h r \log (n) \left(\frac{4d}{\tilde{d}} \log ^2(n)-1\right)\nonumber \\
    &\hspace{2cm}+r^2 \log ^2(n)\left(4 \log ^2(n)-1\right) \big),\label{eq:difference}
\end{align}
The difference in \eqref{eq:difference} is always positive for $n$ sufficiently large since $d/\tilde{d} \geq 1$, implying that $d_E(s,v) \geq r(t_1 + 3)$, and therefore $t_1 + 2 \leq t_2$. Thus, we have shown that all nodes that are at least at Euclidean distance $w_n(\tmax)\log(n) + r$ apart can be distinguished by a seed in $S$.
\end{proof}

We now have all building blocks to prove Theorem \ref{thm:main}.

\begin{proof}[Proof of Theorem \ref{thm:main}]
Let $\mathcal{G}(n,r)$ be a geometric random graph on $[0, \sqrt{n}]^2$ with torus boundary conditions and $r \sim n^k$ , $k \in (0, 1/2)$. By Lemmas \ref{lem:4seeds} and \ref{lemma:simple_distinguishable}, all node pairs that have Euclidean distance larger than $2\wmax \log(n)+ r$ are distinguishable, based on a seed set of $|S| = \log(n)n^{\epsilon}$ randomly-chosen nodes, with $\epsilon \in (0, 1)$. Thus, the set of non-distinguishable node pairs that are queried in the second step of the \simple algorithm (Line \ref{line:second}) can be upper bounded by the set of all node pairs nodes that are at most Euclidean distance $2w_n(\tmax)\log(n) + r $ apart.\\ 

Let us determine the number of queries that still have to be performed. We define the event $\mathcal{C}_v:= \{N_c \geq (1+\delta)\E(N_c)\}$, where $N_c$ is equal to the number of nodes in a circle around node $v$ with radius $2w_n(\tmax)\log(n)+r$. Using the Chernoff bound, we have:
\begin{align}
    \P(\mathcal{C}_v)&=\P(N_c \geq (1+\delta)\E(N_c)) \leq  e^{-\delta^2 \E(N_c)/2},\\ 
    \E(N_c) &= \pi \left(2\wmax\log(n) + r\right)^2,\label{eq:expectation}
\end{align}
with $\wmax$ as given in \eqref{eq:lemmaWmax}.
By the union bound, the probability that this holds for all $v \in V$ is bounded by:
\begin{align}
    \P\left(\bigcup_{v\in V} \mathcal{C}_v \right) &\leq n \P(N_c \geq (1+\delta)\E(N_c)) \leq n e^{-\delta^2 \E(N_c)/2} \nonumber \\ 
    &\leq n^{1-\delta^2\pi \wmax}, \label{eq:chernoff} 
\end{align}
where the last line follows as an upper bound since $\E(N_c) > 2 \pi \wmax \log(n) > 1$. Therefore, \eqref{eq:chernoff} tends to $0$ when $n \rightarrow \infty$ for any constant $\delta > 0$. Thus, with high probability, there are at most $(1+\delta) \E(N_c)$ nodes at Euclidean distance $2\wmax\log(n) + r$.\\

Finally, we compute the number of queries to fully reconstruct the graph using the \textsc{Simple} algorithm. For every seed, we have to query all nodes, resulting in a total of $n \times \log(n) n^{\epsilon}$ queries (Lemma \ref{lem:4seeds}). Then, in the second step of the algorithm, we only query the indistinguishable node pairs, of which there are at most $(1+\delta)\E(N_c)$ with high probability. Thus, the second step uses at most $n \times (1+\delta)\E(N_c)$ queries with high probability, which is of order
\begin{align}
    O\left(n \times (1+\delta) \pi \left(2\wmax \log(n)+r)^2\right)\right) = 
    \tilde{O}\left(\max\left\{\frac{n^{2a+1}}{r^{8/3}},\frac{n^{a+1}}{r^{1/3}}, nr^{2} \right\}\right),\label{eq:querystep2}
\end{align}
filling in \eqref{eq:lemmaWmax} and $d = n^a$.\\

The total query complexity of the \simple algorithm
is equal to the number of initial queries from the seed set combined with the number of queries needed to distinguish all node pairs, which gives
\begin{align}
    \tilde{O}\left(n^{1+\epsilon}\right) {+} \tilde{O}\left(\max\left\{\frac{n^{2a+1}}{r^{8/3}}, \frac{n^{a+1}}{r^{1/3}}, nr^{2} \right\}\right) {=} \tilde{O}\left(\max\left\{n^{1+\epsilon}, \frac{n^{2a+1}}{r^{8/3}}, \frac{n^{a+1}}{r^{1/3}}, nr^{2} \right\}\right).\label{eq:totalqueries}
\end{align}

When $r = o(\sqrt{n})$, \eqref{eq:totalqueries} simplifies to:
\begin{align}
    \tilde{O}\left(\max\left\{n^{1+\epsilon}, n^{2a+1}\right\}\right) = \tilde{O}\left(n^{3/2}\right),
\end{align}
for $a = 1/4$ and $\epsilon = 1/2$, satisfying $2a + \epsilon \geq 1$. In the dense regime with $r \sim n^k, 0 < k < \frac{1}{2}$, we have

\begin{align}
    \tilde{O}\left(\max\left\{n^{1+\epsilon}, \frac{n^{2a+1}}{r^{8/3}}, \frac{n^{a+1}}{r^{1/3}}, nr^{2} \right\}\right) &= \tilde{O}\left(\max\left\{n^{1+\epsilon}, n^{2a+1-8k/3}, n^{a+1-5k/3}, n^{2k+1} \right\}\right),\nonumber \\
    &=
    \begin{cases}
        \tilde{O}\left(n^{3/2 - 4k/3}\right), \hspace{0.8cm} &\text{if } 0 < k \leq \frac{3}{20},\\ 
        \tilde{O}\left(n^{2k+1}\right), & \text{if } \frac{3}{20} < k < \frac{1}{2},
    \end{cases}
\end{align}
with the following values for $a$ and $\epsilon$:
\begin{align}
    a = 
    \begin{cases} 
        \frac{1}{4} + \frac{2}{3}k, \hspace{0.8cm} &\text{if } 0 < k < \frac{3}{20},\\
        \frac{5}{12} - \frac{4}{9}k &\text{if } \frac{3}{20} < k < \frac{1}{2},
    \end{cases}
    \hspace{0.2cm}\epsilon = 
    \begin{cases} 
        \frac{1}{2} - \frac{4}{3}k, \hspace{0.8cm} &\text{if } 0 < k < \frac{3}{20},\\
        \frac{1}{12} + \frac{13}{9}k &\text{if } \frac{3}{20} < k < \frac{1}{2}.
    \end{cases}\label{eq:epsilon}
\end{align}
\end{proof}

\section{Proof of Theorem \ref{lem:connection_to_theorem}}\label{sec:non-edges}
The proof of Theorem \ref{lem:connection_to_theorem} is based on detecting the non-edges by finding all \textit{distinguishable} node pairs given four optimal seed locations, which we introduce in the following lemma.

\begin{figure}[tbh]
    \centering
    \includegraphics[width = 0.5\textwidth]{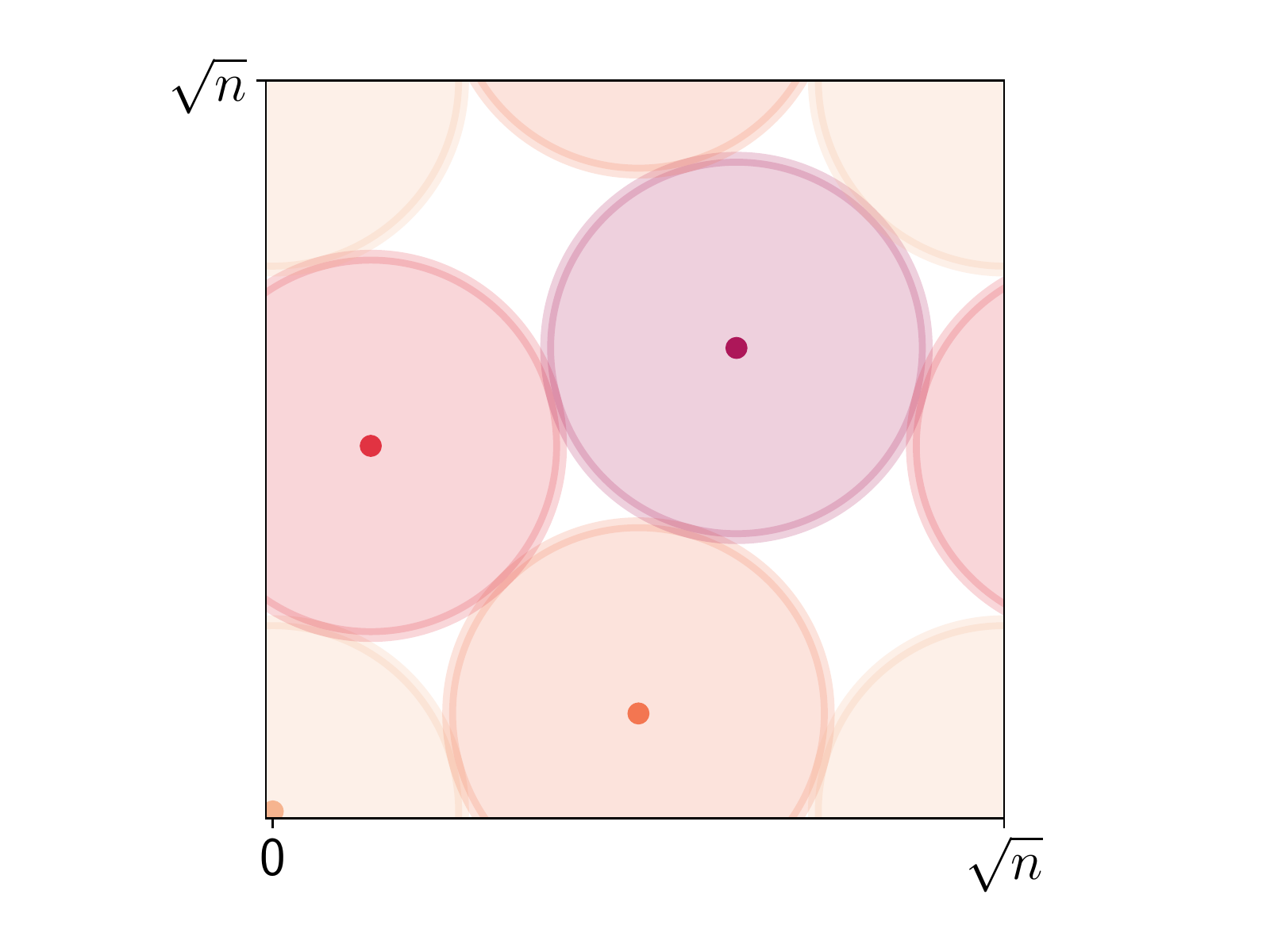}
    \caption{Optimal circle packing for 4 circles on a flat torus.}
    \label{fig:circle_packing}
\end{figure}

\begin{lemma}[Optimal seed locations \cite{dickinson2011optimal}]\label{def:optimalseeds}
The following four seed nodes $\Sopt = \{\tilde{s}_1, \tilde{s}_2, \tilde{s}_3, \tilde{s}_4\}$ with corresponding locations
\begin{align}
    \tilde{s}_1 &= (0, 0),\label{eq:s1}\\ 
    \tilde{s}_2 &= \left(\frac{1}{2}\sqrt{n}, \frac{1}{2}\sqrt{n(7-4\sqrt{3})}\right),\\
    \tilde{s}_3 &= \left( \frac{1}{2}\sqrt{n}(\sqrt{3}-2), \frac{1}{2}\sqrt{n}\right),\\
    \tilde{s}_4 &= \left(\frac{1}{2}\sqrt{3n}(\sqrt{3}-1),\frac{1}{2}\sqrt{3n}(\sqrt{3}-1) \right)\label{eq:s4}
\end{align}
are called \textit{optimal seed locations}, since the minimum distance between any of the four seed nodes is maximized and equal to:
\begin{align}
    x_n = \frac{\sqrt{6}-\sqrt{2}}{2}\sqrt{n}.\label{eq:optimal_diameter}
\end{align}
\end{lemma}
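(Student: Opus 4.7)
The proof naturally splits into two independent claims: (i) the six pairwise toroidal distances between the points $\tilde{s}_1,\ldots,\tilde{s}_4$ are all equal to $x_n$, and (ii) no arrangement of four points on the torus $[0,\sqrt{n}]^2$ can achieve a larger minimum pairwise distance. The plan is to verify (i) by a direct computation using the explicit coordinates of $\Sopt$, and to dispatch (ii) by invoking the optimal packing of four congruent circles on the flat square torus from \cite{dickinson2011optimal}.

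For (i), the driving algebraic identity is $(\sqrt{6}-\sqrt{2})^2 = 8-4\sqrt{3}$, so that $x_n^2 = (2-\sqrt{3})n$. For instance, a direct computation gives
\begin{equation*}
    d_E(\tilde{s}_1,\tilde{s}_2)^2 \;=\; \tfrac{n}{4} + \tfrac{n(7-4\sqrt{3})}{4} \;=\; (2-\sqrt{3})n \;=\; x_n^2.
\end{equation*}
For the pairs involving $\tilde{s}_3$ or $\tilde{s}_4$, the raw coordinates either lie outside $[0,\sqrt{n}]^2$ (the $x$-coordinate of $\tilde{s}_3$ is negative, equivalent on the torus to $\tfrac{\sqrt{3}}{2}\sqrt{n}$) or correspond to the longer wrap-around (both coordinates of $\tilde{s}_4$ equal $\tfrac{3-\sqrt{3}}{2}\sqrt{n}$, whose complement is $\tfrac{\sqrt{3}-1}{2}\sqrt{n}$). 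After reducing each coordinate difference modulo $\sqrt{n}$ and applying $(2-\sqrt{3})^2 = 7-4\sqrt{3}$ and $(\sqrt{3}-1)^2 = 4-2\sqrt{3}$, each of the remaining five squared distances simplifies to $(2-\sqrt{3})n = x_n^2$.

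For (ii), placing open disks of radius $x_n/2$ at the four seed locations yields a valid non-overlapping packing on the torus $[0,\sqrt{n}]^2$, because all pairwise center distances are at least $x_n$. After rescaling from $[0,\sqrt{n}]^2$ to the unit-area torus, the main result of \cite{dickinson2011optimal} states that this radius is the largest possible for any packing of four congruent disks on the flat square torus. Consequently, no four-point configuration on $[0,\sqrt{n}]^2$ achieves a minimum pairwise distance larger than $x_n$, and this bound is attained by $\Sopt$.

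The main obstacle is purely bookkeeping rather than mathematical depth: the six torus-distance computations are routine but each requires careful reduction modulo $\sqrt{n}$ in both coordinates before taking the Euclidean norm, and the simplifications hinge on the small family of nested-radical identities above. The nontrivial optimality claim (ii) is not re-derived here but quoted directly from \cite{dickinson2011optimal}, which is the standard reference for the densest packing of four circles on the square torus.
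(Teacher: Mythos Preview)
Your proposal is correct, but note that the paper does not prove this lemma at all: it is stated with the citation \cite{dickinson2011optimal} in the lemma header and then used immediately afterwards without any proof environment. In other words, the paper treats the optimal four-point configuration on the square flat torus as an imported result from the literature, and both the explicit coordinates and the optimal minimum distance $x_n$ are quoted directly.

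Your approach is compatible with this---you also defer the optimality claim (ii) to \cite{dickinson2011optimal}---but you go further by explicitly verifying claim (i), namely that all six pairwise toroidal distances equal $x_n$. That verification is sound: the key identities $(\sqrt{6}-\sqrt{2})^2=8-4\sqrt{3}$, $(2-\sqrt{3})^2=7-4\sqrt{3}$, and $(\sqrt{3}-1)^2=4-2\sqrt{3}$ are exactly what is needed, and your handling of the modular reductions (the negative $x$-coordinate of $\tilde{s}_3$ and the wrap-around for $\tilde{s}_4$) is correct. So your write-up is strictly more detailed than the paper's treatment, which simply takes the whole statement on faith from the cited packing result.
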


Given a seed node $s \in S$ and a $t > 1$, we classify two sets of nodes: (1) nodes for which the graph distance to $s$ will always be \emph{smaller} than $t$ and (2) nodes for which the graph distance to $s$ will always be \emph{larger} than $t + 1$. Consequently, any node pair with one node from both sets will be distinguishable by seed $s$. We call these two sets the lower and upper bound set and define these using \eqref{eq:lowerbound} - \eqref{eq:upperbound}:
\begin{definition}[Lower and upper bound set]\label{def:lower_upper_boundset}
    The lower bound set $L_s(t)$ of seed $s$ given the graph distance $t$ is the set of nodes in which the graph distance to seed $s$ is smaller than $t$,
    \begin{align}
        L_s(t) &= \{v \,|\, d_E(v, s) < \ell_n(t) \}. \label{eq:lowerboundset}
    \end{align}
    Similarly, the upper bound set $U_s(t)$ of seed $s$ given graph distance $t$ is the set of nodes in which the graph distance to seed $s$ is larger than $t+1$,
    \begin{align}
        U_s(t) &= \{v \,|\, d_E(v, s) > u_n(t+1) \}.
    \end{align}
\end{definition}

\begin{lemma}\label{thm:non-edges}
    Consider a connected geometric random graph $\mathcal{G}(n,r)$ on $[0,\sqrt{n}]^2$ with torus boundary conditions and four seed nodes added at the locations of Definition \ref{def:optimalseeds}. Querying these seeds with all nodes results in recovery of at least $75\%$ of the non-edges in expectation, for all $r >0$.
\end{lemma}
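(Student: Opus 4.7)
The plan is to translate distinguishability into a Euclidean gap condition via Theorem~\ref{thm:lowerbound_upperbound}, and then exploit the maximally spread positions of the four optimal seeds to show that the undetected fraction of non-edges is at most $1/4$. By Definition~\ref{def:lower_upper_boundset}, seed $s$ distinguishes $(u,v)$ as soon as one of them lies in $L_s(t)$ and the other in $U_s(t)$ for some integer $t\geq 1$; using the explicit forms of $\ell_n$ and $u_n$, this is implied by the quantitative Euclidean gap $|d_E(s,u)-d_E(s,v)|>r+O(\kappa)$. Writing $m=(u+v)/2$ and using the identity $d_E(s,u)-d_E(s,v) = 2(s-m)\cdot(v-u)/(d_E(s,u)+d_E(s,v))$, the failure-to-distinguish condition becomes a small-angle condition: the vector $s-m$ makes an angle within $O(r/|v-u|)$ of $\pi/2$ with the segment direction $v-u$.

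I would then bound the measure of non-edges $(u,v)$ for which \emph{all four} seeds in $\Sopt$ satisfy this small-angle condition. For a fixed midpoint $m$, the four angles $\phi_i(m)=\arg(\tilde s_i-m)$ are determined by the coordinates of Lemma~\ref{def:optimalseeds}. Because those four points are pairwise at Euclidean distance $x_n=\frac{\sqrt6-\sqrt2}{2}\sqrt n$ (the maximum possible on this torus), the $\phi_i(m)$ are well-spread for almost every~$m$; in particular at least three of them are separated by an $\Omega(1)$ angle from the common failure direction $\theta_{uv}+\pi/2$ whenever $r=o(\sqrt n)$. Integrating over the orientation of $uv$ (which contributes a bad direction set of angular measure $O(r/|v-u|)$), then over the length $|v-u|\in(r,\Theta(\sqrt n))$ and over $m$, and finally normalizing by the expected number of non-edges, gives the required upper bound of $1/4$, with the rhombic symmetry of $\Sopt$ (Figure~\ref{fig:circle_packing}) used to sharpen the constant to the clean value~$1/4$.

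The principal obstacle is the moderate regime $r=\Theta(\sqrt n)$, in which the far-field approximation $d_E(s,u)+d_E(s,v)\approx 2d_E(s,m)$ breaks down and the failure strip may even wrap around the torus. I would handle this by a direct computation with the explicit seed coordinates of Lemma~\ref{def:optimalseeds}: the rhombic symmetry reduces the analysis to a single angular sector and to the worst-case pair orientations (those perpendicular to the short diagonals of the rhombus), and here the total number of non-edges is itself modest, which makes the $1/4$ bound easier rather than harder. A secondary technical point is to absorb the $O(\kappa)$ correction in the strip width into the leading $r$ term uniformly in $d_E(u,v)$; this is done exactly as in the proof of Lemma~\ref{lemma:simple_distinguishable}.
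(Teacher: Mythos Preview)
Your approach is genuinely different from the paper's. The paper does not bound the undetected pairs via an angular/midpoint argument; it gives a direct \emph{lower} bound on the detected pairs by an annular decomposition around each seed. For every $t$ it counts ordered pairs $(u,v)$ with $u\in\bar L_s(t)=L_s(t)\setminus L_s(t-1)$ and $v\in U_s(t)$, which are automatically distinguished by $s$. Truncating at $\tilde t$ with $2\ell_n(\tilde t)=x_n$ makes the four lower-bound balls disjoint, so the four seed contributions simply add; an explicit summation of the resulting areas then produces the constant $\tfrac{\pi}{8}\bigl(16-8\sqrt{3}-7\pi+4\sqrt{3}\pi\bigr)\approx 0.753$, which is where the $75\%$ actually comes from.

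Your proposal, by contrast, never produces a constant: you assert that integration ``gives the required upper bound of $1/4$'' and that the rhombic symmetry ``sharpens the constant to the clean value $1/4$'', but no integral is computed and no mechanism is given for why $1/4$ (rather than, say, $o(1)$) should appear. In fact your own set-up points to something strictly stronger: for $r=o(\sqrt n)$ the bad-angle width is $O(r/|v-u|)$ and typical non-edges have $|v-u|=\Theta(\sqrt n)$, so the undetected fraction your scheme would deliver is $O(r/\sqrt n)=o(1)$, not $1/4$; the ``clean value'' has no visible source in the argument. There is also a concrete technical gap: the identity $d_E(s,u)-d_E(s,v)=2(s-m)\cdot(v-u)/(d_E(s,u)+d_E(s,v))$ and the midpoint $m=(u+v)/2$ are Euclidean, not toroidal. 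On the torus the geodesics from $s$ to $u$ and from $s$ to $v$ may unfold to different copies of $s$, so the projection formula breaks down precisely for the pairs with $|v-u|$ comparable to $\sqrt n$, which carry most of the mass. The paper sidesteps this entirely by working only with the radial distances $d_E(s,\cdot)$ and the areas of balls and annuli on the torus, for which no unfolding is needed.
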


\begin{proof}
Given four seeds $S^{\text{opt}} = \{\tilde{s}_1, \tilde{s}_2, \tilde{s}_3, \tilde{s}_4\}$ with optimal seed locations given by Lemma \ref{def:optimalseeds}. By Definition \ref{def:distinguishing}, a pair of nodes $u, v \in V$ is \textit{distinguishable} by $s \in \Sopt$ when $|d_G(u, s) - d_G(v, s)| >1$. Combining this with the lower and upper bound set of definition \ref{def:lower_upper_boundset}, we define the set $\bar{E}_s(t) = \{(u, v) | (u,v) \in L_s(t) \times U_s(t) \}$ as the set of \textit{distinguishable} node pairs for graph distance $t$ and seed $s$. Note that this set is \textit{directional}, i.e. we define $(u,v)$ as a different node pair than $(v, u)$, for ease of notation and analysis. As a consequence, there are in total $n(n-1)$ possible node pairs.  \\

We are interested in the number of all distinguishable node pairs, the \textit{non-edges} of the graph, which equals the size of the union of the sets $\bar{E}_s(t)$ for all $t$ and $s \in S$: 
\begin{align}
    \left| \bigcup_t \bigcup_{s \in \Sopt} \bar{E}_s(t)\right| &=  \left| \bigcup_t  \bigcup_{s \in \Sopt} L_s(t) \times U_s(t)  \right| \nonumber \\
    &=  \sum_{t} \left| \bigcup_{s \in \Sopt} \left(L_s(t) \backslash L_s(t-1) \right) \times U_s(t)\right|
    \nonumber \\
    &=  \sum_{t} \left| \bigcup_{s \in \Sopt} \bar{L}_s(t) \times U_s(t)\right|.\label{eq:union_nonedges}
\end{align}
In the second step we have split the lower bound sets into rings to ensure that these sets do not overlap for different $t$. To simplify notation, we define these rings as $\bar{L}_s(t) \coloneqq L_s(t)\backslash L_s(t-1)$.\\


We choose a maximum graph distance $\tt$ such that
\begin{align}
    2 \ell_n(\tt_1) = x_n, \label{eq:find_t}
\end{align}
with $x_n$ as defined in \eqref{eq:optimal_diameter}. Solving \eqref{eq:find_t} gives:
\begin{align}
    \tt_1 = \frac{4c r \log (n)+ \sqrt{n}\left(\sqrt{6} - \sqrt{2}\right) \left(c+ r^{4/3}\right)}{4 r^{7/3}}\label{eq:tildet}
\end{align}

As the sets $\bar{E}_s(t)$ are disjoint for all $s$ as long as $t \leq \tt$, 
\begin{align}
   \sum_{t=1}^{\lfloor\tt_1\rfloor} \left| \bigcup_{s \in \Sopt} \bar{L}_s(t) \times U_s(t)\right| &= 4\sum_{t=1}^{\ftt} \left|\bar{L}_{s_1}(t) \times U_{s_1}(t)\right|   \nonumber \\ 
   &= 4\sum_{t=1}^{\ftt}  \left|\bar{L}_{s_1}(t) \right| \cdot \left|U_{s_1}(t) \right|.\label{eq:eq1}
\end{align}

Since nodes are distributed as a Poisson point process, the number of nodes in disjoint regions are independent. Moreover, we define $\ell_n(0) = 0$. Now, we can bound the expected number of non-distinguishable node pairs by
\begin{align}
    \E(|\bar{E}|)  &\geq 4 \sum_{t=1}^{\ftt}   \left(\pi^2 \left(\ell_n(t)^2 - \ell_n(t-1)^2\right) u_n(t+1)^2\right) \nonumber \\
    &\geq \frac{2 \pi  (\tt-1) r^{10/3} }{3 \left(c+r^{4/3}\right)^2}\big(2 \pi  c \left(2 \tt^2+11 \tt+24\right) r^2 \log (n)-12 c n \log (n)\nonumber \\
    &\hspace{4cm}-\pi  \left(3 \tt^3+11 \tt^2+5 \tt-24\right) r^{10/3}+6 n (\tt-1) r^{4/3}\big),
\end{align}
where the inequality follows since $\ftt \geq \tt - 1$. Finally, filling in $\tt$ as \eqref{eq:tildet} yields

\begin{align}
    \E\left(\left| \bar{E}\right|\right) &\geq \frac{\pi \left(16-8\sqrt{3} - 7\pi + 4\sqrt{3}\pi\right)r^{8/3}}{8\left(c+r^{4/3}\right)^2}n^2 + o(n^2), \label{eq:eq2}
\end{align}
since $r < \sqrt{n}$. When $n \rightarrow \infty$ (and so does $r$), the percentage of node pairs that is distinguishable is
\begin{align}
    \frac{\E(|\bar{E}|)}{n(n-1)} \xrightarrow{n\rightarrow\infty} \frac{\pi (16-8\sqrt{3} - 7\pi + 4\sqrt{3}\pi)}{8}  \approx 0.753.\label{eq:limit}
\end{align}

Thus, the number of non-edges that we detect after querying all nodes in the graph with four seeds equals at least $75\%$ of the total number of non-edges.    
    
\end{proof}
 
In Lemma \ref{thm:non-edges} we assume that the seed nodes are at specific locations. However, as we do not know the locations of the nodes, such seeds may not exist, and even if they do, we are not able to select them. In the following lemma we show, similarly to Lemma \ref{lem:4seeds}, that with in total $n^\epsilon$ seeds, there are four seeds that are at most distance $y$ away from the four specified optimal locations. 

\begin{lemma}\label{lem:dense-regime_seeds}
    Suppose that $\Phi_s$ is a Poisson point process on $[0, \sqrt{n}]^2$ with intensity $\lambda_s = \log(n)n^{\epsilon - 1}$. We take $y = n^b$ for $b \in (0,1/2)$. When $2b + \epsilon \geq 1$, with high probability there exist $s_i \in S$ with $i = 1,2, 3, 4$, such that $d_E(s_i,\tilde{s_i}) \leq y$ and $n \rightarrow \infty$ for $\tilde{s_i} \in \tilde{S}$.
\end{lemma}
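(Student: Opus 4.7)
The plan is to mirror the argument of Lemma~\ref{lem:4seeds} almost verbatim, with the four disks of radius $y$ centred at the fixed optimal locations $\tilde{s}_1,\tilde{s}_2,\tilde{s}_3,\tilde{s}_4$ from Lemma~\ref{def:optimalseeds} playing the role of the disks around a generic node $v$. The key simplification compared to Lemma~\ref{lem:4seeds} is that we now only need a single seed in each of four specified disks, rather than four seeds in a disk around every node, and there is no union bound over $n$ nodes.

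First I would introduce, for $i=1,2,3,4$, the event $\mathcal{A}_i$ that the ball $B(\tilde{s}_i,y)$ contains at least one seed of $\Phi_s$. Since $\Phi_s$ is a Poisson point process with intensity $\lambda_s=\log(n)n^{\epsilon-1}$, the number of seeds in $B(\tilde{s}_i,y)$ is Poisson with mean $\lambda_s \pi y^2 = \pi\log(n)\, n^{2b+\epsilon-1}$, so
\begin{equation*}
\P(\mathcal{A}_i^c)=\exp\!\left(-\pi\log(n)\,n^{2b+\epsilon-1}\right).
\end{equation*}
Then I would apply the union bound over $i=1,\dots,4$:
\begin{equation*}
\P\!\left(\bigcup_{i=1}^{4}\mathcal{A}_i^c\right)\leq 4\exp\!\left(-\pi\log(n)\,n^{2b+\epsilon-1}\right)= 4\, n^{-\pi n^{2b+\epsilon-1}}.
\end{equation*}
Under the hypothesis $2b+\epsilon\geq 1$, the exponent $\pi n^{2b+\epsilon-1}\geq \pi$ is bounded away from $0$, so this probability tends to $0$ as $n\to\infty$. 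On the complementary event $\bigcap_i \mathcal{A}_i$, we can pick, for each $i$, an arbitrary seed $s_i\in B(\tilde{s}_i,y)\cap\Phi_s$, giving four seeds with $d_E(s_i,\tilde{s}_i)\leq y$, as required.

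The one mild subtlety to flag is that the four optimal locations in Lemma~\ref{def:optimalseeds} are a fixed, deterministic set of points on the torus, so the four disks $B(\tilde{s}_i,y)$ are themselves deterministic. For $y=n^b$ with $b<1/2$ and $n$ large these disks are disjoint (their centres are pairwise at distance $x_n=\Theta(\sqrt{n})$, while $y=o(\sqrt{n})$), but the union bound above does not actually require disjointness, so no case analysis on $b$ is needed. There is no real obstacle here; the proof is just a direct Poisson computation parallel to~\eqref{eq:poisson}--\eqref{eq:asymptotics}, and the threshold $2b+\epsilon\geq 1$ emerges in exactly the same way as in Lemma~\ref{lem:4seeds}.
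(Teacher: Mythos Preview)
Your proof is correct and follows essentially the same approach as the paper: a direct Poisson computation giving $\P(\mathcal{A}_i^c)=e^{-\lambda_s\pi y^2}$, followed by a union bound over the four fixed optimal locations, yielding $4n^{-\pi n^{2b+\epsilon-1}}\to 0$ when $2b+\epsilon\geq 1$. The only minor addition in the paper is an explicit remark that the four disks are disjoint (so the four chosen seeds are distinct), which you also note in passing.
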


\begin{proof}
We define $\B_{\tilde{s}}$ as the event that there is at least one point $s \in S$ at Euclidean distance at most $y$ from the optimal seed $\tilde{s} \in \tilde{S}$. The probability of this event is
\begin{align}
    \P(\B_s) = \P(N_s( \pi y^2) \geq 1) &= 1-e^{-\ls \pi y^2} \label{eq:poissonS}.
\end{align}
By the union bound, the probability that this event occurs for all four points $\tilde{s} \in \tilde{S}$ is bounded by
\begin{align}
    \P\left(\bigcap_{\tilde{s} \in \tilde{S}}\B_{\tilde{s}}\right) = 1 - \P\left(\bigcup_{\tilde{s} \in \tilde{S}}\bar{\mathcal{B}}_{\tilde{s}}\right) &\geq 1 - \sum_{\tilde{s} \in \tilde{S}}\P(\bar{\mathcal{B}}_{\tilde{s}}) = 1 - 4 e^{-\ls \pi y^2}\label{eq:Poisson_combinedS}.
\end{align}
As $y = n^b$ and $\ls = \log(n)n^{\epsilon -1}$,
\begin{align}
    \P\left(\bigcap_{\tilde{s} \in \tilde{S}}B_v\right) \geq 1 -  4n^{-\pi n^{2b+\epsilon -1} }.\label{eq:asymptoticsS}
\end{align}
When $n \rightarrow \infty$, \eqref{eq:asymptoticsS} tends to 1 as long as $2b + \epsilon \geq 1$, so these seeds exist with high probability. The circles with radius $y$ around the four optimal seed locations will not overlap, so $|S| \geq 4$.
\end{proof}

Now, we combine the above lemma's to prove Theorem \ref{lem:connection_to_theorem}.

\begin{proof}[Proof of Lemma \ref{lem:connection_to_theorem}]
    In Theorem \ref{thm:non-edges} we have shown that when there are four seeds at the optimal locations, we can reconstruct at least $75\%$ of the non-edges. We now have a seed set $S$ that is random sample of size $\log(n)n^\epsilon, \epsilon \in (0, 1)$ of the node set and we define $S^{(4)} = \{s_1, s_2, s_3, s_4\} \subseteq S$ as the four seed nodes that are closest to the optimal seed locations from Lemma \ref{def:optimalseeds}. By Lemma \ref{lem:dense-regime_seeds}, we know that with high probability the four seeds $S^{(4)}$ are at most Euclidean distance $y = n^b$ away from the optimal seed locations, where $2b + \epsilon \geq 1$. Similar to Theorem \ref{thm:non-edges}, we compute the number of distinguishable node pairs $|\bar{E}|$ as
    \begin{align}
        |\bar{E}| = \left| \bigcup_t \bigcup_{s \in S} \bar{E}_s(t)\right| &\geq \left| \bigcup_t \bigcup_{s \in S^{(4)}} \bar{E}_s(t)\right| \nonumber \\
        &= \sum_{t} \left| \bigcup_{s \in \Sopt} \bar{L}_s(t) \times U_s(t)\right|.
    \end{align}
    Again, $\cap_{s\in S^{(4)}}\bar{E}_s(t)$ has to be empty, which means in this setting that we choose the maximal graph distance $\tilde{t}$ such that
    \begin{align}
        u_n(\tt + 1) + \ell_n(\tt) = x_n - 2y,
    \end{align}
    since two seeds in $S^{(4)}$ are always Euclidean distance at least $x_n - 2y$. Then 
    \begin{align}
        \tt = \frac{2 c r \log (n)+ \left(\sqrt{n}(\sqrt{6} - \sqrt{2}) -4(r+2y)\right)\left(c+ r^{4/3}\right)}{2 \left(c r+2 r^{7/3}\right)}.
    \end{align}
    Then, we can follow the same steps as in \eqref{eq:eq1} -- \eqref{eq:eq2}, resulting in the same limit as in \eqref{eq:limit}, provided that $y = n^b < \frac{1}{2}$ and $r < \sqrt{n}$. Thus with $n \times |S| = n^{\epsilon+1}\log(n)$ queries we can already reconstruct at least $75\%$ of the non-edges in expectation. 
\end{proof}

\section{Simulations}
We now compare our theoretical results to simulations and explore the performance of the algorithm in practice. For all simulations, we show the average results over 100 iterations. The dashed lines in the plots show the theoretical bound on the complexity (up to a constant factor).\\

\begin{figure}[b!]
    \centering
    \includegraphics[width = 0.45\textwidth]{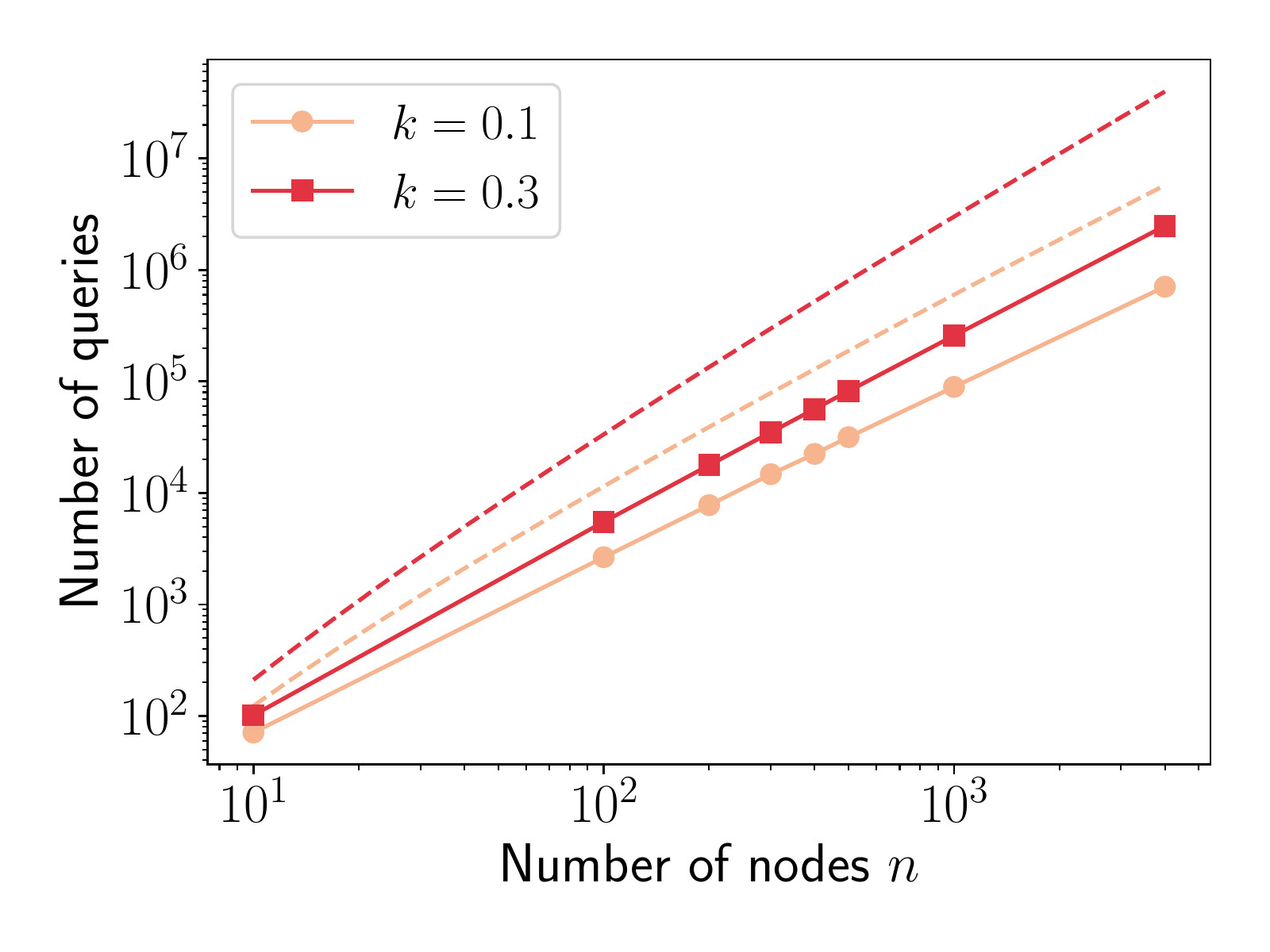}
    \caption{Simulated query complexity of the \simple algorithm for a GRG with $r \sim n^{k}$, $k = 0.1$ and $k = 0.3$. The upper dashed line represents the theoretical query complexity (up to a constant) from Theorem \ref{thm:main}.}
    \label{fig:differentk}
\end{figure}
First, we compare the query complexity to the theoretical bound for a dense GRG with $r = n^k$, where $k = 0.1$ and $k = 0.3$. Figure \ref{fig:differentk} shows that the theoretical query complexity is indeed an upper bound on the simulated query complexity. Moreover, in practice, the growth rate of the query complexity of the \simple algorithm is closer to the number of edges than to $\tilde{O}(n^{2k+1})$ or $\tilde{O}(n^{3/2-4k/3})$. Thus, the \simple algorithm indeed results in almost optimal reconstruction. \\

\begin{figure}[t!]
    \centering
    \includegraphics[width = 0.45\textwidth]{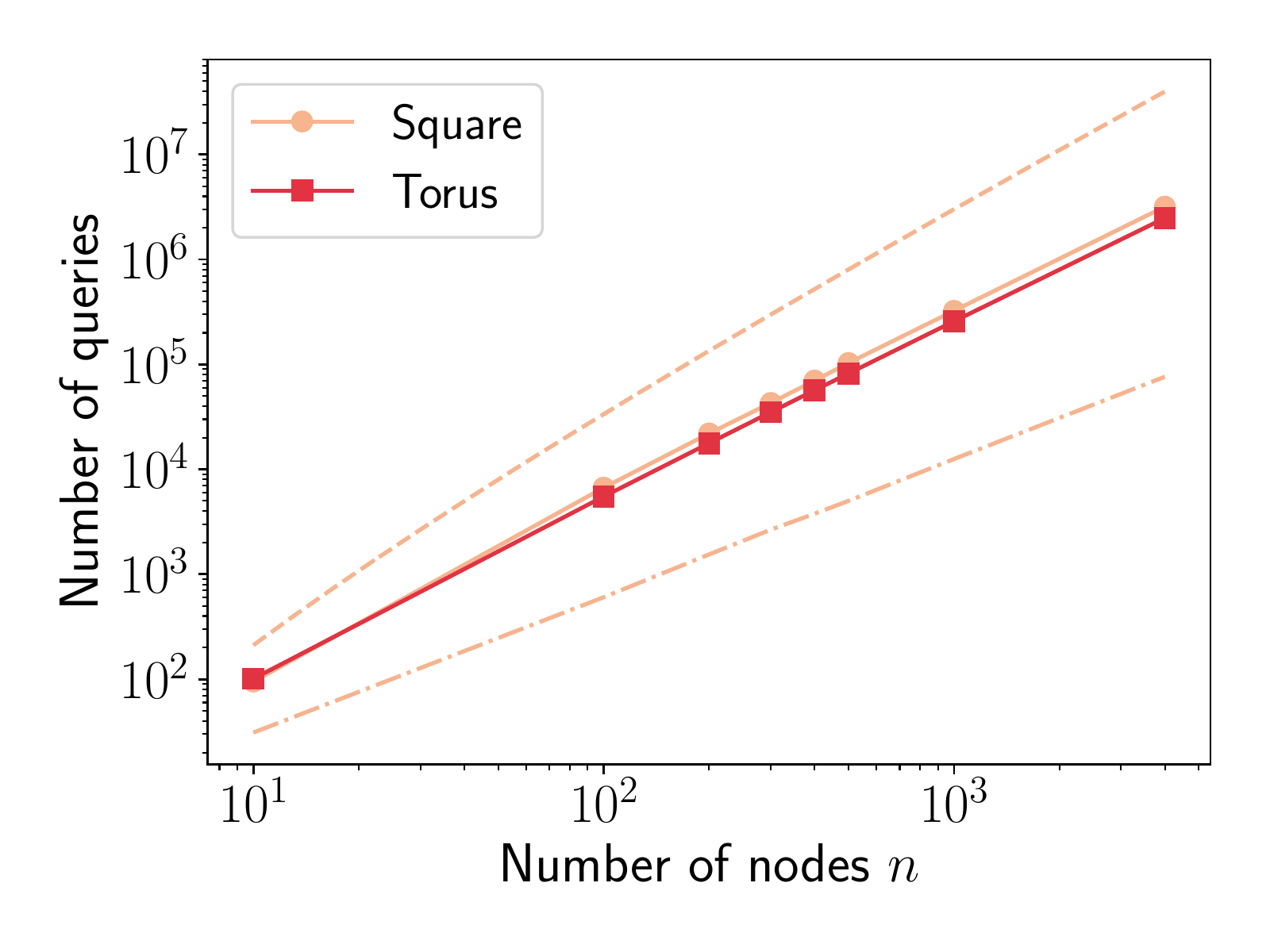}
    \caption{Simulated query complexity of the \simple algorithm for a GRG with $r \sim n^{0.3}$ on a torus and on a square. The upper dashed line represents the theoretical query complexity $n^{2k+1}$ (up to a constant) from Theorem \ref{thm:main} and the lower dashed line is the number of edges in the toroidal GRG.}
    \label{fig:torus}
\end{figure}

To simplify the proof, we have assumed torus boundary conditions for the GRG. Figure~\ref{fig:torus} shows the difference in query complexity for a GRG on a torus and a GRG on a square. The GRG on the torus performs slightly better in terms of query complexity compared to the GRG on a square. This is explained by the fact that a torus does not have a boundary and estimating graph distances between nodes close to the boundary is hard \cite{dani2023reconstruction} than far from the border. However, both on the square and on the torus, the query complexity follows our complexity bound. Therefore, it is reasonable to assume that the \simple algorithm has a similar query complexity for GRGs on a square. With some extra work, one could prove this result similarly to the proof of Theorem \ref{thm:main}. \\

\begin{figure}[b!]
    \centering
    \includegraphics[width = 0.45\textwidth]{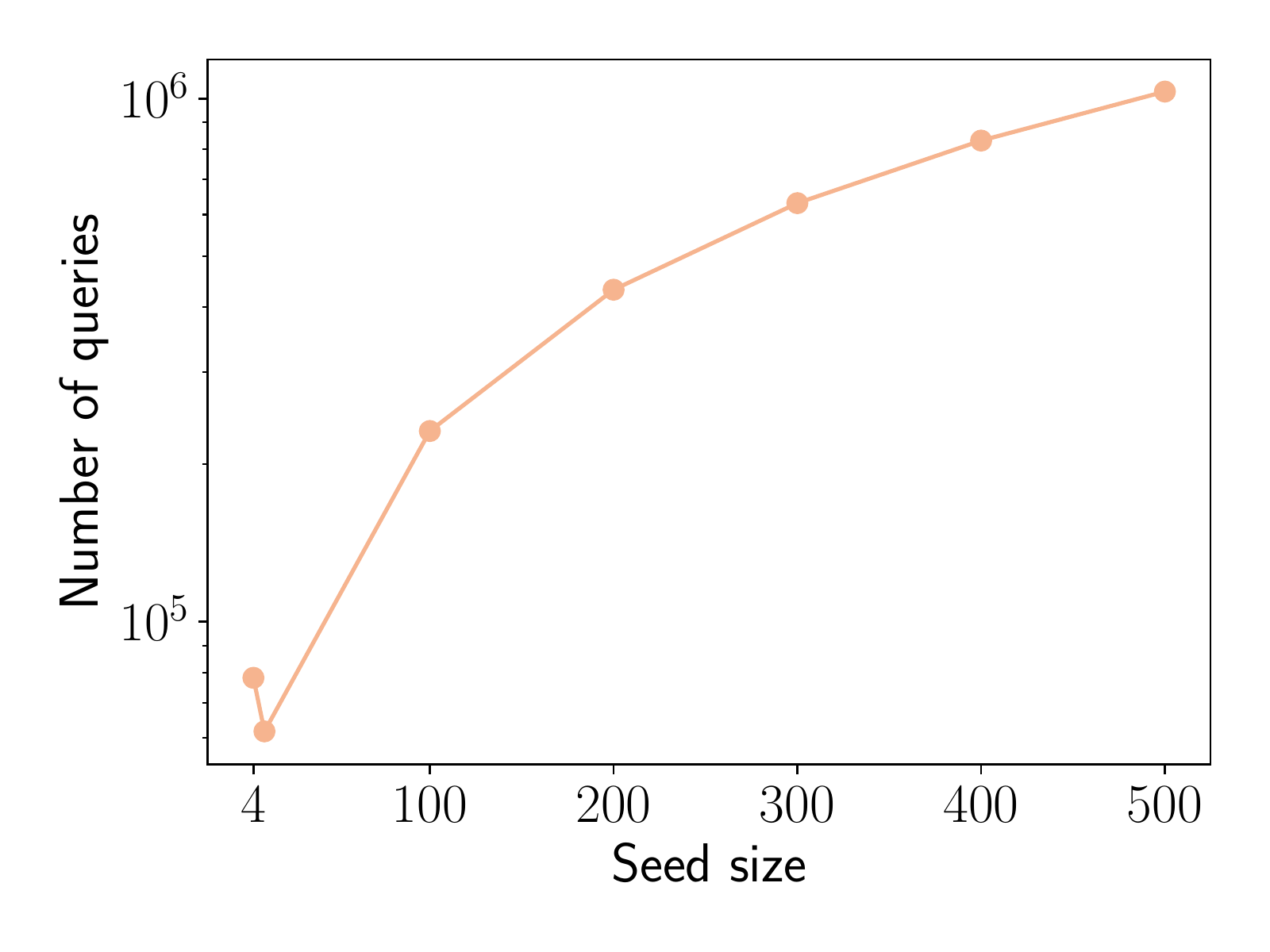}
    \caption{Simulated query complexity of the \simple algorithm for different seed sizes, based on a GRG with $k = 0.3$ and $n = 2000$. The seed size used in Theorem \ref{thm:main} corresponds to $|S| = 386$. }
    \label{fig:seed_size}
\end{figure}

Figure \ref{fig:seed_size} plots the query complexity of \simple for different seed set sizes. In our results, we use a seed set of size $\log(n)n^\epsilon$, with $\epsilon$ as defined in \eqref{eq:epsilon}. However, the smallest number of queries is obtained with a smaller seed size. This observation confirms our hypothesis that it is possible to reconstruct the graph in even less shortest path queries with only a few seeds.\\

In Figure~\ref{fig:RRG} we compare the \simple algorithm for a random regular graph and the GRG, where we choose the degree for the random regular graph such that it is equal to the average degree in the GRG. As the proven query complexity in \cite{mathieu2023simple} holds for sparse graphs while Theorem \ref{thm:main} only holds for dense geometric random graphs, we compare the RRG and the GRG both in the sparse regime (Figure \ref{sfig:sparse_comparison}), and in the dense regime (Figure \ref{sfig:dense_comparison}).\\

Note that the two plots in Figure \ref{fig:RRG} show different comparisons: in the sparse regime, for both the RRG and GRG we keep the degree constant at $\Delta = 50$, which means that $r$ in the GRG is variable, while in the dense regime, we fix the radius $r = n^k$ and change the degree of the RRG accordingly. Figure \ref{sfig:dense_comparison} shows that indeed, the GRG outperforms the random regular graph in query complexity in a dense setting. This trend is also observed in a sparse setting in Figure \ref{sfig:sparse_comparison}, indicating that even in a sparse setting with bounded degree one could prove a lower query complexity than $\tilde{O}(n^{5/3})$ in the case of geometric random graphs.\\

\begin{figure}[b!]
    \begin{subfigure}{ 0.45\textwidth}
        \includegraphics[width = \textwidth]{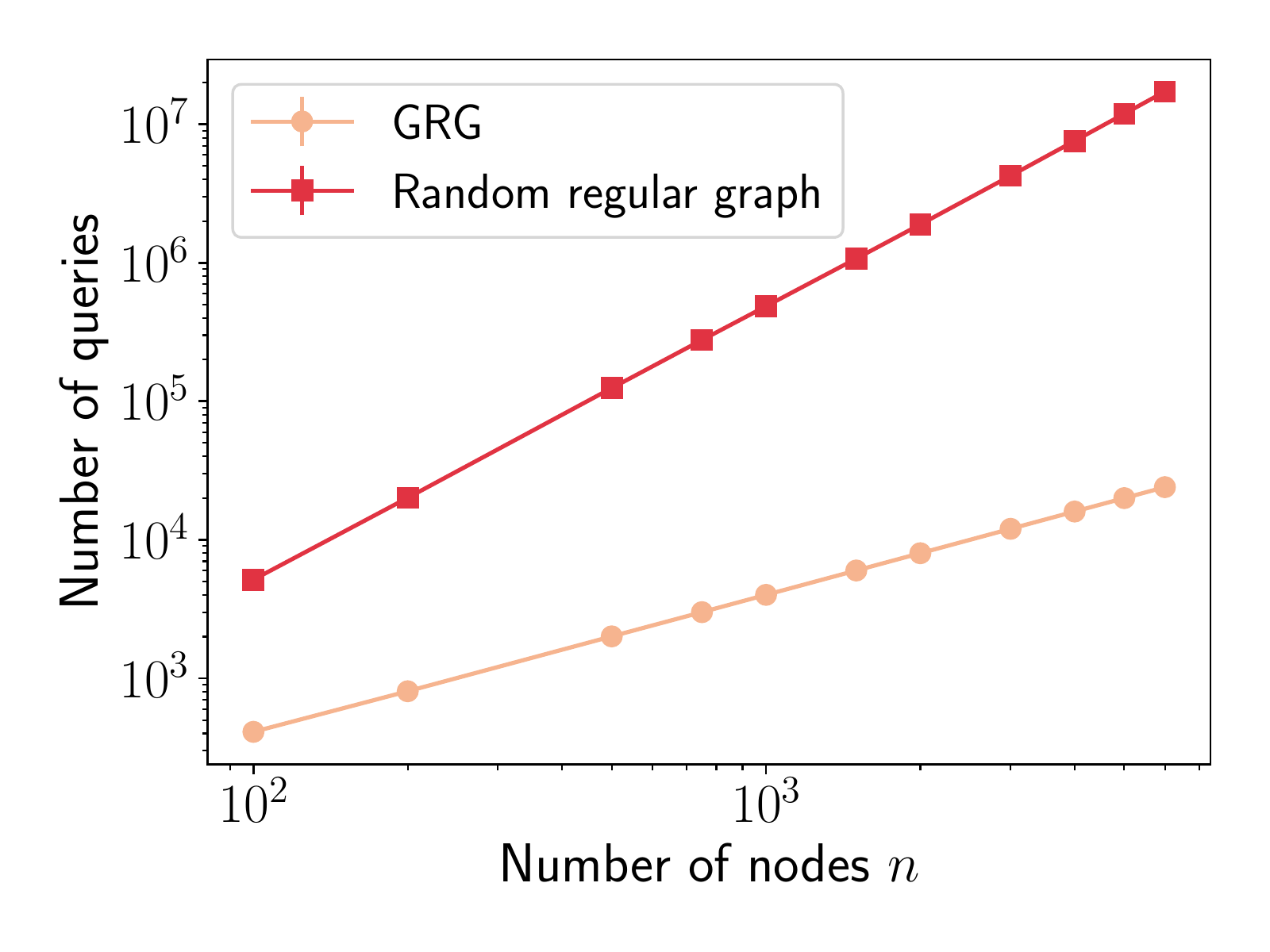}
        \caption{$r=\sqrt{\Delta/(n\pi)}$, where $\Delta = 50$ and $|S| = 4$.}
        \label{sfig:sparse_comparison}
    \end{subfigure}\hfill
       \begin{subfigure}{ 0.45\textwidth}
        \includegraphics[width = \textwidth]{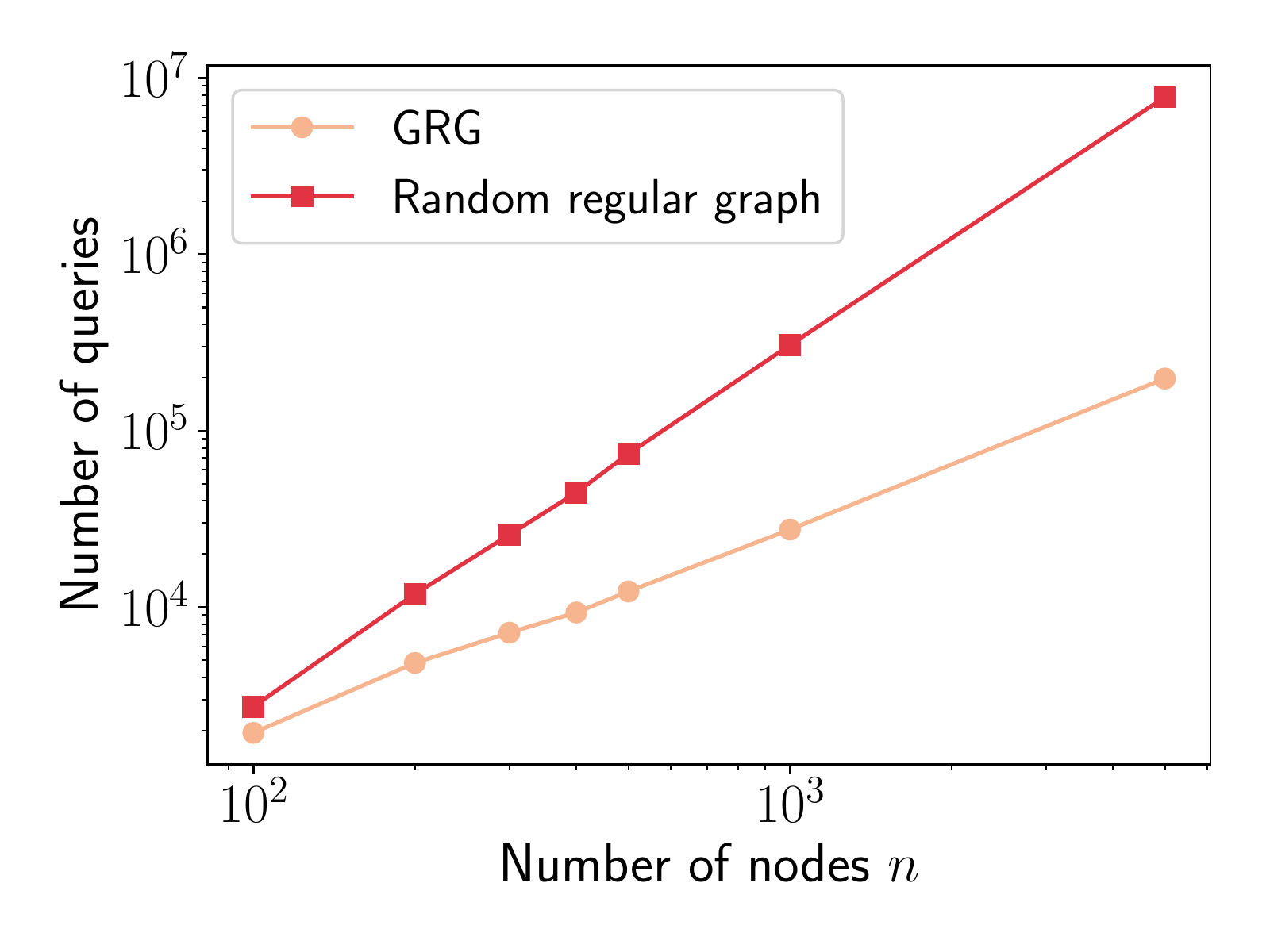}
        \caption{$r = n^{0.3}$.}
        \label{sfig:dense_comparison}
    \end{subfigure}
    \caption{Simulated query complexity of the \simple algorithm for a GRG with a random regular graph with similar average degree, in a sparse and dense setting.} \label{fig:RRG}
\end{figure}

\begin{figure}[tb!]
    \begin{subfigure}{0.45\textwidth}
        \includegraphics[width = \textwidth]{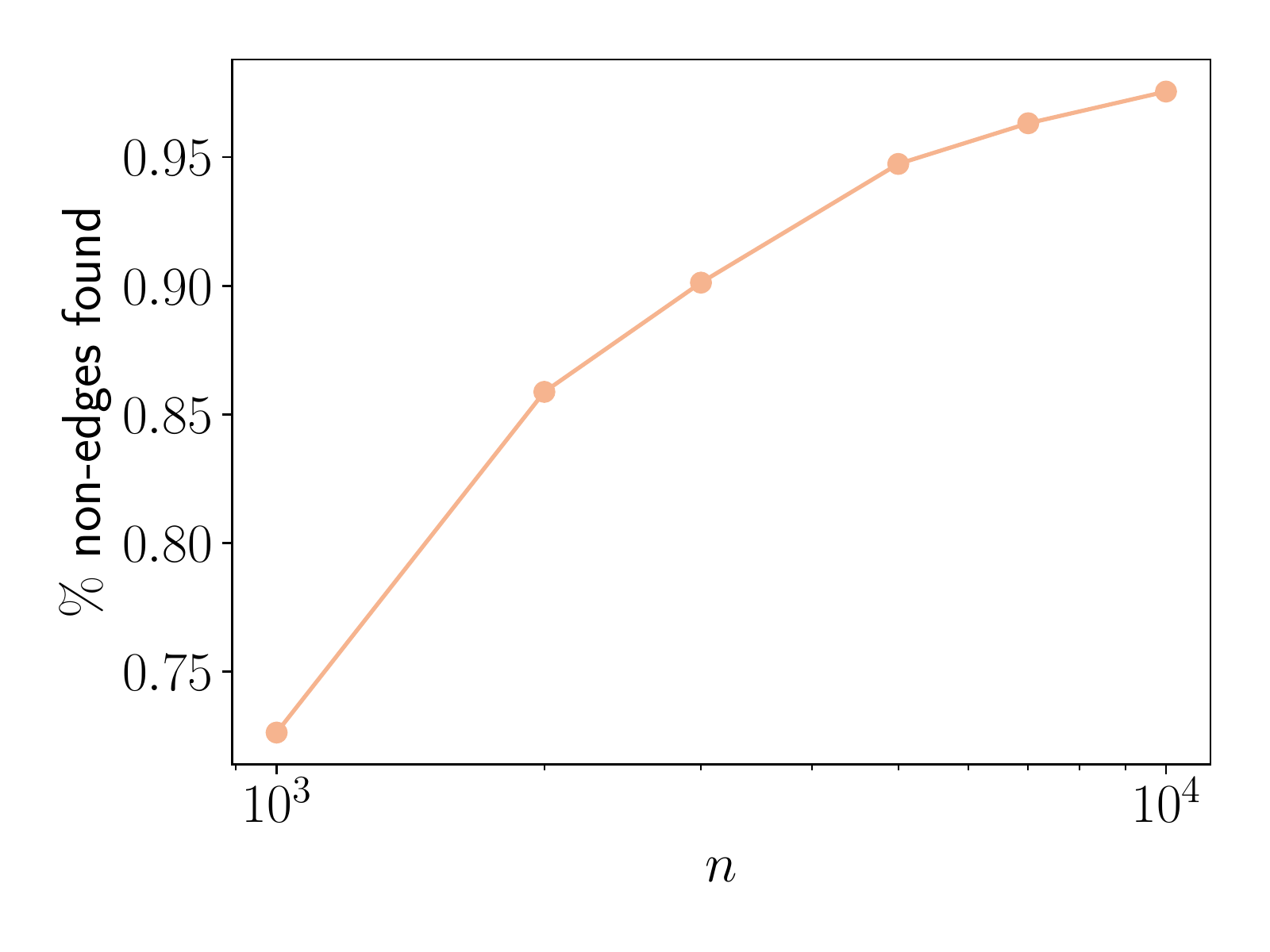}
        \caption{Sparse GRG, $r=2\sqrt{\log(n)}$.}
        \label{sfig:sparse_non_edges}
    \end{subfigure}\hfill
       \begin{subfigure}{ 0.45\textwidth}
        \includegraphics[width = \textwidth]{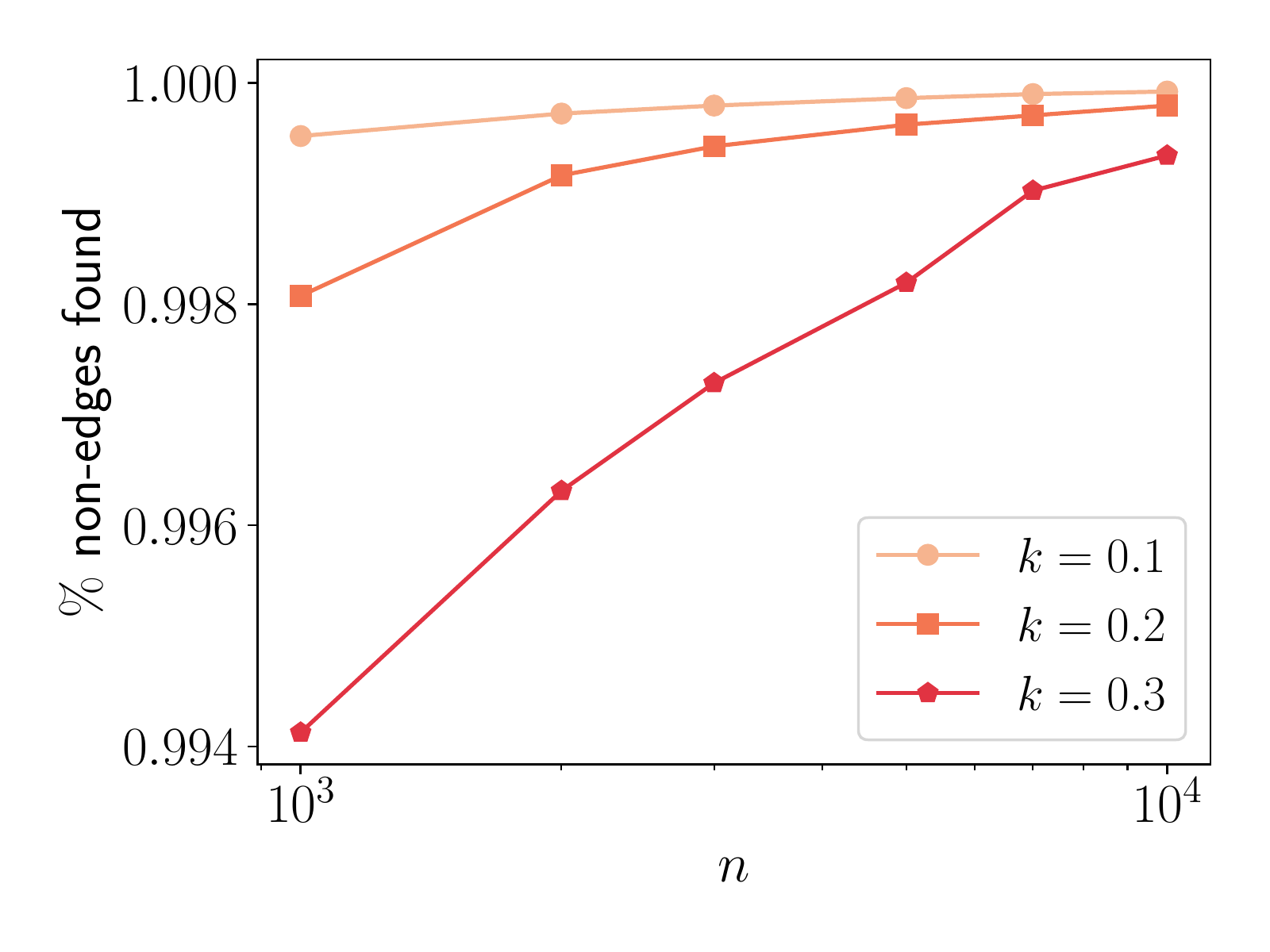}
        \caption{Dense GRG, $r = n^k$.}
        \label{sfig:dense_non_edges}
    \end{subfigure}
    \caption{Percentage of discovered non-edges after the first step in the \simple algorithm.} \label{fig:non_edges}
\end{figure}

Finally, we confirm the results of Lemma \ref{lem:connection_to_theorem} in Figure \ref{fig:non_edges} and show that with the given seed set $S$, where $|S| = n^{\epsilon}\log(n)$, $\epsilon = k$, querying the nodes with all seeds already results in identifying at least $75 \%$ of the non-edges, for both dense and sparse graphs. In particular, the simulations show that for large $n$, approximately $95\%$ of the non-edges is identified.  For the sparse regime, we chose $|S| = 4$. Figure \ref{sfig:sparse_non_edges} shows that in practice, almost all non-edges are found after the first step in the \simple algorithm. The difference between our theoretical bound of $75\%$ and the simulation results is mainly due to the fact that the lower and upper bound as given in Theorem \ref{thm:lowerbound_upperbound} are not tight.

\section{Conclusion}
In this paper, we have shown that the \simple algorithm can provide a fast reconstruction of dense geometric random graphs, whereas until now only results were known for sparse bounded-degree graphs. Moreover, we prove that the \simple algorithm is almost optimal for the class of geometric random graphs, as the query complexity asymptotically matches the number of edges.\\

We have shown that the underlying geometry of the GRG helps in fast reconstruction, due to the local character of edges (i.e., nodes are only connected if they are close). Due to the local edge property, we have shown that querying only 4 seed nodes with all nodes in the graph already results in at least $75\%$ of the non-edges detected. Therefore, we conjecture that the \simple algorithm also efficiently reconstructs other types of geometric random graphs, such as the hyperbolic random graph, which is possibly an even better representation of the Internet topology \cite{boguna2010sustaining}, or the geometric inhomogeneous random graph \cite{bringmann2019geometric}.\\

Another future research direction could be to design algorithms to \emph{approximately} recover the graph, for example by terminating the \simple algorithm after a given number of queries, or only query a selection of nodes.
For the verification problem, it would be interesting to investigate whether it is possible to obtain a lower query complexity for approximate recovery.

\bibliographystyle{siam}

\end{document}